\newtheorem{theorem}{Theorem}
\newtheorem{corollary}[theorem]{Corollary}
\newtheorem{lemma}[theorem]{Lemma}
\newtheorem{definition}{Definition}
\newtheorem{example}{Example}
\newtheorem{remark}{Remark}
\def\cont{\mathop{\rm cont}\nolimits}
\def\prim{\mathop{\rm prim}\nolimits}
\def\disc{\mathop{\rm disc}\nolimits}
\def\ldcf{\mathop{\rm ldcf}\nolimits}
\def\coeff{\mathop{\rm coeff}\nolimits}
\def\res{\mathop{\rm res}\nolimits}
\def\Res{\mathop{\rm Res}\nolimits}
\journal{}
\begin{document}

\begin{frontmatter}



\title{Cylindrical Algebraic Decomposition \\ with Equational Constraints}



\author[COV]{Matthew~England}
\ead{Matthew.England@coventry.ac.uk}
\author[BATH]{Russell~Bradford}
\ead{R.J.Bradford@bath.ac.uk}
\author[BATH]{James H.~Davenport}
\ead{J.H.Davenport@bath.ac.uk}
\address[COV]{Faculty of Engineering, Environment and Computing, Coventry University, UK}
\address[BATH]{Faculty of Science, University of Bath, UK}

\begin{abstract}
Cylindrical Algebraic Decomposition (CAD) has long been one of the most important algorithms within Symbolic Computation, as a tool to perform quantifier elimination in first order logic over the reals.  More recently it is finding prominence in the Satisfiability Checking community as a tool to identify satisfying solutions of problems in nonlinear real arithmetic.

The original algorithm produces decompositions according to the signs of polynomials, when what is usually required is a decomposition according to the truth of a formula containing those polynomials.  One approach to achieve that coarser (but hopefully cheaper) decomposition is to reduce the polynomials identified in the CAD to reflect a logical structure which reduces the solution space dimension: the presence of Equational Constraints (ECs).

This paper may act as a tutorial for the use of CAD with ECs: we describe all necessary background and the current state of the art.  In particular, we present recent work on how McCallum's theory of reduced projection may be leveraged to make further savings in the lifting phase: both to the polynomials we lift with and the cells lifted over.  We give a new complexity analysis to demonstrate that the double exponent in the worst case complexity bound for CAD reduces in line with the number of ECs.  We show that the reduction can apply to both the number of polynomials produced and their degree.  
\end{abstract}

\begin{keyword}
cylindrical algebraic decomposition \sep non linear real arithmetic 
\end{keyword}

\end{frontmatter}

\section{Introduction}
\label{SEC:Intro}

\subsection{Cylindrical algebraic decomposition}
\label{SUBSEC:IntroCAD}

A Cylindrical Algebraic Decomposition (CAD) splits $\mathbb{R}^n$ into cells to maintain an invariance structure relative to an input.  Traditionally, the cells are produced to be \emph{sign-invariant} for a set of input polynomials: meaning that throughout each cell, each of those polynomials has a constant sign.  
The first CAD algorithm was introduced by \citet{Collins1975} to perform Quantifier Elimination (QE) over real closed fields.  We describe the necessary details and terminology for CAD in Section \ref{SUBSEC:CADBackground}.  The invariance property of a CAD means that problems for non-linear polynomial systems such as QE are reduced to testing a finite number of sample points; with the nature of the cells produced allowing for the easy generation of solution descriptions.  

However, the use of CAD is often limited by the complexity of computing one.  CAD is known to have worst case complexity doubly exponential in the number of variables \citep{DH88, BD07}.  Broadly speaking (see Theorem \ref{thm:Complexity1} for a precise result) if the input has $m$ polynomials of degree at most $d$ then CAD complexity could be in the order of $(2dm)^{2^{O(n)}}$.  For some problems there exist algorithms with better complexity (see the textbook by \citet{BPR06} for example), and there are also many specialised algorithms for restricted inputs; but CAD implementations remain the only general purpose approach for many problems. 

This complexity statement is obtained by considering the necessary size of the output for certain examples, and so can only be reduced with changes to the output requirements.  Further, unlike some other theoretical results, this one is clearly felt in practice: when increasing dimensionality one will hit the ``doubly exponential wall'' where progress becomes infeasible.  
However, extensive investigation into CAD and its sub-algorithms has allowed for the wall to be ``pushed back'' to the point of allowing many useful computations. 

Applications of CAD include: 
motion planning \citep{SS83II}, 
weight minimisation for truss design \citep{CC18}, 
epidemic modelling \citep{BENW06}, 
steady state analysis of biological networks \citep{BDEEGGHKRSW17},
economic reasoning \citep{MBDET18},
artificial intelligence to pass exams \citep{WMTA16},
parametric optimisation \citep{FPM05}, 
theorem proving \citep{Paulson2012}, 
derivation of optimal numerical schemes \citep{EH16},
reasoning with multi-valued functions \citep{DBEW12}, 
and much more.

\subsection{CAD and \textsf{SC$^\mathsf{2}$}}
\label{SUBSEC:IntroSC2}

CAD has long been important within Symbolic Computation with implementations in multiple computer algebra systems.  However, in recent years CAD has been of interest to the separate community of Satisfiability Checking.  There, search based algorithms developed for the Boolean SAT problem, make use of heuristics and learning (see the textbook by \citet{BHvMW09} for details).  Success here led to research on domains other than the Booleans, and Satisfiability Module Theory (SMT)-Solvers which use SAT algorithms on the Boolean skeleton of a problem with queries to theory solvers to see if a satisfying Boolean assignment is valid in the domain (learning new clauses if not) \citep{BSST09, KS13b}.  

For the SMT domain of non-linear real arithmetic (NRA), CAD and more generally computer algebra systems can play the role of such theory solvers\footnote{However, as discussed by \citet{AAB+16a} a more custom approach is beneficial.}.  \textsc{SMT-RAT} contains a tailored CAD implementation for use in SMT \citep{LSCAB13, KA19}, while \textsc{Z3} contains an algorithm by \citet{JdM12} which uses CAD theory without producing actual CADs.  The latter inspired new developments in symbolic computation such as non-uniform CAD by  \citet{Brown2015}.  Further collaboration is informed by the \textsf{SC$^\mathsf{2}$} project: forging interaction between \textbf{S}ymbolic \textbf{C}omputation and \textbf{S}atisfiability \textbf{C}hecking \citep{AAB+16a}.

CADs are produced relative to a problem statement expressed in logic connectives between atoms involving (potentially non-linear) polynomials with integer coefficients.  The original CAD algorithm produces decompositions according to the signs of these polynomials, essentially ignoring the logical structure entirely and so producing decompositions fine enough to solve all problems for all logical formulae formed by those polynomials.  A Satisfiability Checking approach like that of \citet{JdM12} takes an opposite focus, analysing and extending the logical skeleton of the formula until a solution is found with the correct algebraic properties. 
In order to derive a full solution from a CAD what is truly required is a decomposition on whose cells the truth of the overall logical formula is constant\footnote{A sign-invariant decomposition for the polynomials in the formula achieves this, but with far more cells and computation than required.}.

\subsection{Equational constraints}
\label{SUBSEC:IntroEC}

A CAD complexity analysis, such as that in Section \ref{SEC:Complexity}, does not just conclude the upper bound $(2dm)^{2^{O(n)}}$: it actually shows that in one dimension we must isolate the roots of at most $M$ polynomials of degree $D$, where $D=d^{2^{O(n)}}$ and $M=m^{2^{O(n)}}$.  The same orders have been found for lower bounds: for $D$ by \citet{DH88} and for $M$ by \citet{BD07}.
The formulae demonstrating this are not straightforward but the underlying polynomials \emph{are} surprisingly simple (all bar two linear with each only involving a bounded number of variables, independent of $n$).
This demonstrates that the difficulty of CAD resides in the complicated number of ways simple polynomials can interact.  Improvement must come from reducing the number of interactions we track. 

\begin{definition}
\label{def:QFF}
A {\em Quantifier Free Tarski Formula (QFF)} is made up of a finite number of atoms connected by the standard Boolean operators $\land$, $\lor$ and $\neg$.  The atoms are statements about the signs of polynomials with integer coefficients: $f \, \sigma \, 0$ where $\sigma \in \{=, <, >\}$ (and by combination also $\{\leq, \geq, \neq\}$). 
\end{definition}

\begin{definition}
\label{def:EC}
An {\em Equational Constraint} (EC) is a polynomial equation logically implied by a QFF.  If it is an atom of the formula it is said to be {\em explicit} and if not then it is {\em implicit}.
\end{definition}

\begin{example}
\label{ex:EC}
Let $f$ and $g$ be polynomials.  (a) The formula $f=0 \land g>0$ has explicit EC $f=0$.  (b) The formula $f=0 \lor g=0$ has no explicit EC but it does have the implicit EC $fg=0$.  (c) The formulae $f^2 + g^2 \leq 0$ also has no explicit EC but this one has two implicit ECs $f = 0$ and $g = 0$.
\end{example}

\citet{Collins1998a} was the first to suggest that CAD could be simplified in the presence of an EC.  He noted that a CAD need only be sign-invariant for the defining polynomial of an EC, and sign-invariant for any others only within those cells where the EC polynomial is zero.  
He sketched an intuitive approach to produce this by refining the polynomials identified by his CAD algorithm.  This approach was formalised and verified by \citet{McCallum1999b}.  
A complexity analysis \citep[Section 2]{BDEMW16} showed that making use of a single EC in this way reduces the double exponent of $m$ in the complexity bound for CAD by $1$.  Some natural questions arising are: 
\begin{itemizeshort}
\item Can savings be made iteratively in the presence of multiple ECs?
\item Do those savings further reduce the double exponent?
\item Can corresponding savings be made for the double exponent of $d$?
\end{itemizeshort}
The first question was answered affirmatively by \citet{McCallum2001}, although the extension was not trivial\footnote{and contained a small mistake as described in Remark \ref{rem:McC01a}.}.  The other questions are answered affirmatively by the present paper.
Such questions are of growing importance as CAD finds new application domains with increasing numbers of equations: e.g. in biology by \citet{BDEEGGHKRSW17} and \citet{EEGRSW17}; and in economics by \citet{MBDET18, MDE18}.  Indeed, many problems that arise in the Satisfiability Checking context contain far more equalities than inequalities (see the NRA benchmarks in the SMT-LIB \citep{SMTLIB}).

\subsection{Contribution and plan}

In Section \ref{SEC:Background} we define the necessary CAD terminology and revise the theory for projection, and reduced projection in the presence of an EC, of \citet{McCallum1998, McCallum1999b, McCallum2001}.  Then in Section \ref{SEC:Lifting} we present recent work on how to leverage this for savings elsewhere in CAD.  In Section \ref{SEC:Alg} we propose and verify the corresponding algorithm.  

We demonstrate the benefit first with a worked example in Section \ref{SEC:Example} and then a complexity analysis in Section \ref{SEC:Complexity}.  The latter observes the double exponent in the bound on the number of projection polynomials reducing by the number of ECs.  In Section \ref{SEC:DegreeTheory} and \ref{SEC:DegreeEvaluation} we explain how a similar reduction can be observed for their degree if we assume CAD input is pre-processed with a Gr\"{o}bner Basis: a common step in CAD implementations but this is the first theoretical justification for it.  Together these show that CAD is doubly exponential in number of variables minus number of ECs.

In Section \ref{SEC:Primitive} we examine the main caveat: an assumption of primitivity to the ECs.  We demonstrate its presence in the key results of \citet{DH88, BD07} proving worst case CAD complexity, and discuss what might be done.  We finish by discussing lessons for the Satisfiability Checking community who may call CAD within SMT-solvers in Section \ref{SEC:SC2}, and giving a summary in Section \ref{SEC:Summary}.

The main contributions in this paper 
were presented at ISSAC 2015 \citep{EBD15} and CASC 2016 \citep{ED16a}.  These conference publications addressed the savings in the number of polynomials and their degrees separately.  The present paper unifies the results into a coherent whole providing a single statement of the state of the art.  It also expands on some details, such as the need for primitivity in Section \ref{SEC:Primitive}.  

We include all necessary background theory, allowing the paper to act as a tutorial for CAD with ECs, timely given the increased use of CAD outside of computer algebra systems, as part of satisfiability checkers.  We further expose the results to the wider \textsf{SC$^\mathsf{2}$} by considering implications for CAD in SMT-solvers in Section \ref{SEC:SC2}.  

\section{Background Material}
\label{SEC:Background}

\subsection{CAD computation and terminology}
\label{SUBSEC:CADBackground}

We work under variable ordering $\bm{x} = x_1\prec \ldots \prec x_n$.  The \emph{main variable} of a polynomial or formula (${\rm mvar}$) is the greatest present under the ordering.  

\begin{definition}
A \emph{Cylindrical Algebraic Decomposition (CAD)} is a decomposition of $\mathbb{R}^n$ into connected cells such that:
\begin{itemize}
\item each cell is a \emph{semi-algebraic set} meaning it is defined by a finite sequence of polynomial equations or inequalities; and

\item the cells are arranged \emph{cylindrically} meaning the projections of any two cells in the decomposition on any lower dimensional space with respect to the ordering are either equal of disjoint.
\end{itemize}
\end{definition}
\noindent The latter condition means that each CAD cell is defined by a sequence of conditions: $c_1(x_1), c_2(x_1, x_2), \dots, c_n(x_1, \dots, x_n)$ where each $c_i$ is one of:
\begin{align}
\ell_i(x_1, \dots, x_{i-1}) < &x_i \label{eq:lower}\\
\ell_i(x_1, \dots, x_{i-1}) < &x_i < u_i(x_1, \dots, x_{i-1}) \label{eq:sector}\\
&x_i < u_i(x_1, \dots, x_{i-1}) \label{eq:upper} \\
&x_i = s_i(x_1, \dots, x_{i-1}) \label{eq:section}
\end{align} 
The $\ell_i, u_i, s_i$ are constants when $i=1$ and otherwise most likely an indexed root expression (a particular root of a polynomial).  The former condition tells us that an equivalent semi-algebraic description can be found. 

We describe the computation scheme and terminology that the Collins-descended CAD algorithms share.  Assume a set of input polynomials (possibly derived from formulae).  The first phase of CAD, {\em projection}, applies projection operators repeatedly, each time producing another set of polynomials in one less variable (following the variable ordering).  Together these are the {\em projection polynomials} used in the second phase, {\em lifting}.

First $\mathbb{R}$ is decomposed into cells according to the real roots of polynomials univariate in $x_1$.  Each cell is either a point and therefore its own sample, or an interval inside which we choose a convenient sample (often the dyadic rational with least denominator).  
We next decompose $\mathbb{R}^2$ by repeating a process over each cell in $\mathbb{R}^1$.  In each case we take the bivariate projection polynomials in $(x_1,x_2)$, evaluate them at the  sample point of the cell in $\mathbb{R}^1$ to give univariate polynomials in $x_2$ whose roots we can count and isolate.  

The cells identified in $\mathbb{R}^2$ fall into two categories.  {\em Sections} are defined according to the vanishing of a polynomial as in (\ref{eq:section}), and correspond to the real root of a univariate polynomial.  {\em Sectors} are usually defined as the regions between two sections as in (\ref{eq:sector}), corresponding to the intervals between real roots of a univariate polynomial.  The exceptions are the two infinite sections at either end of the decomposition as in (\ref{eq:upper}) and (\ref{eq:lower}); or if there was no need to decompose at all there may be a single infinite sector with no restrictions on $x_2$.  In each case the sample point is extended from that of the  underlying cell: for sections to include the algebraic number isolated for the real root and for sectors to include any convenient number from the interval (certainly in $\mathbb{Q}$).
Together the sections and sectors form a \emph{stack} over the cell in $\mathbb{R}^1$. 

Taking the union of these stacks gives the CAD of $\mathbb{R}^2$.  The process may then be repeated, each time producing a CAD of larger $\mathbb{R}^i$, until a CAD of $\mathbb{R}^n$ is produced.  The subspaces $\mathbb{R}^i$ decomposed are those implied by the ordering: $(x_1)$- space, $(x_1, x_2)$-space, $(x_1, x_2, x_3)$-space etc.

Cells are represented at a minimum with a sample point and {\em index}.  The latter is a list of integers, with the $k$th describing variable $x_k$ according to the ordered real roots of the projection polynomials.  If the integer is $2i$ the cell is a section corresponds to the $i$th root (counting low to high) and if $2i+1$ it is the adjacent sector\footnote{The dimension of a cell is hence easily identified from the index.}.  
The projection operator is chosen so polynomials are {\em delineable} in a cell: the portion of their zero set in the cell consists of disjoint sections.  A set of polynomials are {\em delineable} if each is delineable individually, and the sections of different polynomials  are identical or disjoint.  If all projection polynomials are delineable then the input polynomials must be {\em sign-invariant}: have constant sign in each cell of the CAD.  

There have been a great many extensions and optimisations of CAD since its inception, with the survey article by \citet{Collins1998a} highlighting those from the first 20 years.  Since then further highlights have included: symbolic-numeric lifting schemes \citep{Strzebonski2006, IYAY09}; local projection approaches  \citep{Brown2013, Strzebonski2016}; comprehensive Gr\"obner basis approaches \citep{FIS15b} and decompositions via complex space \citep{CMXY09, BCDEMW14}.  

\subsection{Projection operators}
\label{SUBSEC:Proj}

A key improvement to CAD has been in the projection operator to reduce the number of projection polynomials computed \citep{Hong1990,  McCallum1998, McCallum1999b, McCallum2001, Brown2001a, BDEMW13, HDX14, BDEMW16}.  

The minimal complete projection operator (for sign-invariant CAD) proposed is that of \citet{Lazard1994}, however the proof of its correctness was shown to be flawed by \citet{MH16}.  Shortly before this article went to press a corrected proof was published by \citet{MPP19} (requiring changes elsewhere in the CAD lifting phase).  The theory in the present paper uses the family of projection operators by \citet{McCallum1998, McCallum1999b, McCallum2001} but, in due course they may be improved by extending Lazard's family into the EC theory.  We note that the relative savings from the ECs laid out in the present paper would still be maintained if recast into Lazard projection.

Throughout, let $\cont$, $\prim$, $\disc$, $\ldcf$ and $\coeff$ denote the content, primitive part, discriminant, leading coefficient, and set of all coefficients of polynomials respectively (in each case taken with respect to a given mvar).  When applied to a set of polynomials we interpret these as applying the operation to each polynomial in the set.  e.g.
\begin{align*}
\cont(A)  &= \left\{ \cont(f) \, | \, f \in A \right\}, \\
\coeff(A) &= \cup_{f \in A} \coeff(A).
\end{align*}
We let $\res$ denote the resultant of a pair of polynomials, and for a set 
\[
\res(A)=\left\{\res(f_i,f_j) \, | \, f_i \in A, f_j \in A, f_i \neq f_j \right\}.
\]
Recall that a set $B \subset \mathbb{Z}[{\bf x}]$ is an \emph{irreducible basis} if the elements of $B$ are of positive degree in the mvar, irreducible and pairwise relatively prime.  Throughout this section suppose $B$ is an irreducible basis for a set of polynomials, and further that every element of $B$ has mvar $x_n$ and that $F \subseteq B$.

We may now define the projection operators introduced respectively by \citet{McCallum1998, McCallum1999b, McCallum2001}:
\begin{align}
P(B) &:=   \res(B) \cup \disc(B) \cup \coeff(B),
\label{eq:P}
\\
P_{F}(B) &:= P(F) \cup \{ {\rm res}(f,g) \mid f \in F, g \in B \setminus F \},
\label{eq:ECProj} 
\\
P_{F}^{*}(B) &:= P_{F}(B) \cup \disc(B \setminus F) \cup \coeff(B \setminus F).
\label{eq:ECProjStar}
\end{align}
In the general case with $A$ a set of polynomials and $E \subseteq A$ we proceed with projection by: letting $B$ and $F$ be irreducible bases of the primitive parts of $A$ and $E$ respectively; applying the operators as defined above; and then taking the union of the output with $\cont(A)$. 

\begin{remark}
\label{rem:sizes}
We see that (\ref{eq:ECProj}) is contained in (\ref{eq:P}) and will usually be smaller.  It excludes discriminants and coefficients of $B \setminus F$ which are then reinstated by (\ref{eq:ECProjStar}).  It also excludes those resultants which involve two polynomials from $B \setminus F$, an exclusion that is maintained by (\ref{eq:ECProjStar}). Thus we have 
$(\ref{eq:ECProj}) \subseteq (\ref{eq:ECProjStar}) \subseteq (\ref{eq:P})$.
\end{remark}

\begin{remark}
\label{rem:coeffs}
The full set of coefficients of a polynomial is usually unnecessary for CAD \citep{Brown2001a}.  We require knowledge of when the polynomial's degree drops, and thus the vanishing of the leading coefficient is of most importance.  But in the case that it did vanish then the next coefficient becomes leading and thus must also be studied.  To guarantee correctness in all cases the full set is taken but most implementations will safely optimise this.  E.g. if the leading coefficient is constant then it can never vanish and no further coefficients need to considered.
\end{remark}

\begin{remark}
\label{rem:McC01a}
Operator (\ref{eq:ECProjStar}) is different to the $P_{F}^{*}(B)$ of \citet{McCallum2001}, which excluded the coefficients of $B \setminus F$.  The 2001 definition was a mistake, pointed out to us by the anonymous referee of this paper and confirmed in a private communication with McCallum.  However, it is not necessary for us to prove a corrected version of the theorems from that paper because, as was noted by \citet{McCallum2001} (just after Theorem 2.1) if we allow the additional coefficients then we can assume degree invariance and thus use the existing theorems of \citet{McCallum1998} to validate $P_{F}^{*}(B)$.
\end{remark}

\subsection{Sign invariant CAD Projection}
\label{SUBSEC:McC1}

The theorems below use the idea of \emph{order-invariance}, meaning each polynomial has constant order of vanishing within each cell, which of course implies sign-invariance.  We say that a polynomial with mvar $x_k$ is \emph{nullified} over a cell in $\mathbb{R}^{k-1}$ if it vanishes identically throughout.

\begin{theorem}[\citet{McCallum1998}, Thm. 1]
\label{thm:McC1}
Let $S$ be a connected submanifold of $\mathbb{R}^{n-1}$ in which each element of $P(B)$ is order-invariant. 
Then on $S$, each element of $B$ is either nullified or analytic delineable\footnote{A variant on delineability defined by \citet{McCallum1998}.}. Further, the sections of elements of $B$ that are not nullified are either identical or pairwise disjoint, and each element of $B$ is order-invariant on such sections.
\end{theorem}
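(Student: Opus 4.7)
The plan is to reduce the statement to a per-polynomial analysis for elements of $B$ together with a pairwise analysis mediated by resultants, exploiting the following elementary consequence of order-invariance on a connected submanifold: any polynomial that is order-invariant on $S$ is either identically zero on $S$ or nowhere zero on $S$. Indeed, if its order of vanishing at one point is positive then it vanishes at that point, and order-invariance then forces vanishing at every point of $S$; a non-vanishing order-invariant polynomial must have order $0$ throughout.

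First I would fix $f \in B$ and assume $f$ is not nullified over $S$. Order-invariance of $\coeff(f) \subseteq P(B)$, combined with the above dichotomy, singles out a well-defined effective leading coefficient: letting $m$ be the largest index with $a_m(x_1,\ldots,x_{n-1}) \not\equiv 0$ on $S$, the polynomial $a_m$ vanishes nowhere on $S$, so $f(\bar x',x_n)$ has constant degree $m$ in $x_n$ for every $\bar x' \in S$. Next I would exploit order-invariance of $\disc(f) \in P(B)$. Since $f$ is an irreducible element of $B$ of positive degree in $x_n$, it is square-free as a polynomial in $x_n$ over $\mathbb{Q}(x_1,\ldots,x_{n-1})$ (Gauss's lemma plus primitivity), so $\disc(f)$ is a non-zero polynomial. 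In the generic sub-case where $\disc(f)$ is nowhere zero on $S$, the complex roots of $f(\bar x',\cdot)$ are simple and distinct, so the implicit function theorem lifts them locally to analytic functions; connectedness of $S$ then patches the local lifts into globally defined analytic sections, which is the desired analytic delineability, and order-invariance of $f$ along the sections is immediate from simplicity of the roots. In the degenerate sub-case where $\disc(f)$ vanishes identically on $S$, the naive IFT argument breaks and one must invoke the deeper complex-analytic input from \citet{McCallum1998} --- namely the analytic continuation results ultimately resting on the Newton-Puiseux/Abhyankar theory --- that show, for an irreducible and non-nullified $f$ with order-invariant coefficients and discriminant, the real roots still extend to analytic functions on $S$ and $f$ is order-invariant along the resulting sections.

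For the pairwise part, fix distinct $f,g \in B$. Because $B$ is an irreducible basis, $f$ and $g$ are non-associate irreducibles, hence $\res(f,g)$ is a non-zero polynomial in $x_1,\ldots,x_{n-1}$, and the same dichotomy applies on $S$. If $\res(f,g)$ is nowhere zero on $S$ then $f$ and $g$ have no common complex root above any point of $S$, so their sections are pairwise disjoint. If $\res(f,g)$ vanishes identically on $S$ then they share a common root generically above $S$; the same analytic continuation argument as before identifies the corresponding analytic branches, so the shared section persists along the whole component and the two sections coincide rather than merely crossing.

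The main obstacle I expect is precisely the degenerate sub-case in which $\disc(f)$ (respectively $\res(f,g)$) is order-invariant on $S$ but vanishes identically there: the elementary IFT picture collapses and one has to appeal to the finer analytic-geometric machinery that uses irreducibility of $f$ (and non-associateness of $f,g$) to rule out pathological branching or coalescence that order-invariance alone would not exclude. The remaining claims --- sign-invariance along sections and order-invariance along sections --- then follow by combining analytic delineability (which forces $f$ to vanish precisely on its sections and have locally constant sign on sectors) with the order-invariance of $\disc(f)$, which pins down the multiplicity of $f$ along each section.
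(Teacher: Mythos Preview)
The paper does not supply its own proof of this theorem: it is quoted verbatim as a background result, with the citation \citet{McCallum1998} in the theorem header, and is then used as a tool in validating the projection operators and in the proof of Theorem~\ref{thm:Alg}. There is therefore no proof in the paper against which to compare your proposal.

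That said, your sketch is a faithful outline of the argument in \citet{McCallum1998}: obtaining degree-invariance from order-invariance of the coefficients, then analytic delineability and section order-invariance from order-invariance of the discriminant (with the genuinely hard work being the degenerate case where $\disc(f)$ vanishes identically on $S$, handled by the complex-analytic continuation machinery you allude to), and finally the identical-or-disjoint conclusion for sections of distinct elements of $B$ via order-invariance of their resultant. Your identification of the main obstacle --- that the elementary implicit-function-theorem argument collapses when the discriminant or resultant vanishes identically on $S$, forcing one into the deeper analytic theory --- is exactly where the substance of McCallum's proof lies. One small caution: in the degenerate resultant case you say the shared section ``persists along the whole component and the two sections coincide rather than merely crossing''; making this precise is more delicate than the one-line sketch suggests, and again relies on the analytic machinery rather than following automatically from connectedness.
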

Suppose we apply $P$ repeatedly to generate projection polynomials.  Repeated use of Theorem \ref{thm:McC1} concludes that a CAD produced by lifting with respect to these projection polynomials is order-invariant so long as no projection polynomial with mvar $x_k$ is nullified over a cell in the CAD of $\mathbb{R}^{k-1}$.  This condition is known as \emph{well-orientedness}.  It is common for problems to be well oriented and the condition can be easily checked for during lifting\footnote{Recall that to lift over a cell we first substitute the cell sample point into the polynomials with main variable one higher: so at this stage we check if any vanish entirely.}.  In the case that a problem is not well-oriented we cannot conclude sign-invariance (at least over the cell in which nullification occurred).  There are some cases where we can rescue the computation (see the report by \citet{Brown2005a}) but in some cases the only option will be to use an alternative complete projection operator, such as that of \citet{Hong1990}. We note that the Lazard operator, recently validated by \citet{MPP19}, removes the well-orientedness condition for sign-invariant CAD\footnote{Instead a modified lifting stage checks for nullification and adapts such polynomials to recover the lost information.}.

\subsection{CAD projection for a formula with a single EC}
\label{SUBSEC:McC2}

A second theorem allows us to understand how $P_E(A)$ is validated.
\begin{theorem}[\citet{McCallum1999b}, Thm. 2.2]
\label{thm:McC2}
Let $f$ and $g$ be integral polynomials with mvar $x_n$ and $r(x_1,\ldots,x_{n-1})$ be their resultant.  Suppose $r \neq 0$.
Let $S$ be a connected subset of $\mathbb{R}^{n-1}$ on which $f$ is delineable and $r$ order-invariant.
Then $g$ is sign-invariant in every section of $f$ over $S$.
\end{theorem}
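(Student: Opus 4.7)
My strategy is to fix an arbitrary section of $f$ over $S$ and show that the polynomial $g$ is either nowhere zero or identically zero on it; since the section is connected (it is the graph of a continuous root function $\alpha\colon S\to\mathbb{R}$ given by delineability of $f$) and $g$ is continuous, this dichotomy immediately gives sign-invariance.

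First I would define $h(\bar{x}) := g(\bar{x},\alpha(\bar{x}))$ on $S$. If $h$ never vanishes, we are done by continuity and connectedness, so assume $h(\bar{x}_0)=0$ for some $\bar{x}_0\in S$. Then $f(\bar{x}_0,x_n)$ and $g(\bar{x}_0,x_n)$ share the root $\alpha(\bar{x}_0)$, so by the standard specialisation property of resultants we get $r(\bar{x}_0)=0$ --- provided at least one leading coefficient (in $x_n$) survives specialisation at $\bar{x}_0$. The degenerate case where both leading coefficients vanish simultaneously is handled separately: delineability of $f$ on $S$ already prevents $\ldcf(f)$ from vanishing identically on $S$, and together with $r \not\equiv 0$ this rules out a failure of specialisation at a point where $f$ genuinely has $\alpha(\bar{x}_0)$ as a root. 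Since $r\not\equiv 0$ and $r$ is order-invariant on the connected manifold $S$, order-invariance upgrades $r(\bar{x}_0)=0$ into the statement that $r$ vanishes to a constant positive order on \emph{all} of $S$.

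The main obstacle is now to lift ``$r\equiv 0$ on $S$'' to the stronger conclusion ``$h\equiv 0$ on $S$''. A priori, $r(\bar{x})=0$ only says $f(\bar{x},\cdot)$ and $g(\bar{x},\cdot)$ share \emph{some} root in $x_n$, not necessarily $\alpha(\bar{x})$. To pin it to the section, I would use two ingredients in combination: (i) delineability of $f$ on $S$, which means the real roots of $f$ over $S$ are pairwise disjoint continuous branches, so the shared root can be tracked as a continuous function locally; and (ii) the relationship between the order of vanishing of $r$ at $\bar{x}_0$ and the total multiplicity of common roots of $f$ and $g$ (read off from the Bezout-type identity $r = uf + vg$ and the Sylvester determinantal expression for $r$). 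Because this total multiplicity is locally constant by order-invariance, the shared root cannot jump from one branch of $f$ to another as $\bar{x}$ varies; near $\bar{x}_0$ it must be the branch $\alpha$, and then by connectedness of $S$ it is $\alpha$ everywhere. Hence $h$ vanishes identically on the section, completing the dichotomy and the proof.
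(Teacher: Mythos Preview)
The present paper does not prove this theorem; it is quoted from McCallum (1999b) and used as a building block, so there is no in-paper argument to compare against.

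On your proposal: the overarching plan---show that the zero set of $h$ on the section is both closed (obvious) and open, hence empty or everything by connectedness---is the right one and matches McCallum's original strategy. The first five steps are sound. The gap is in step~6. Two points. First, the B\'ezout-type identity $r=uf+vg$ does not by itself link the order of $r$ at $\bar x_0$ to common-root multiplicities; the relevant tool is the product formula $r(\bar x)=\ldcf(f)(\bar x)^{\deg g}\prod_i g(\bar x,\alpha_i(\bar x))$, the product running over the \emph{complex} roots $\alpha_i$ of $f(\bar x,\cdot)$. Second, and more seriously, your ``the shared root cannot jump between branches of $f$'' appeal to delineability controls only the real zero set of $f$, whereas the common roots of $f$ and $g$ live in $\mathbb C$; nothing you have written rules out the common root sliding off the real branch $\alpha$ onto a complex root of $f$ while the total count stays fixed. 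What actually closes the argument is upper semicontinuity: each factor $g(\bar x,\alpha_i(\bar x))$ is locally analytic, so its order of vanishing is upper semicontinuous in $\bar x$; since these orders sum to the constant order of $r$, no single summand can strictly increase at $\bar x_0$, and therefore the order of $h$ at $\bar x_0$ equals its generic value nearby, forcing $h\equiv 0$ in a neighbourhood. Your plan is on the right track but needs this sharpening to become a proof.
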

Suppose $A$ was derived from a formula with EC defined by $E = \{f\}$, and that we apply $P_E(A)$ once and then $P$ repeatedly to generate projection polynomials.  
Assuming the input is well-oriented, we can use Theorem \ref{thm:McC1} to conclude the CAD of $\mathbb{R}^{n-1}$ order invariant for $P_E(A)$.  The CAD of $\mathbb{R}^n$ is then sign-invariant for $E$ using Theorem \ref{thm:McC1} and sign-invariant for $A$ in the sections of $E$ using Theorem \ref{thm:McC2}.  Hence the CAD is truth-invariant for the formula. 

\subsection{CAD projection with multiple ECs}
\label{SUBSEC:McC3}

We now consider the case of multiple ECs.  We could of course apply the previous technology by designating one EC for special use and treating the rest as any other constraint (heuristics can help with the choice \citep{BDEW13}).   But this does not gain any further advantage from the additional ECs, which should be reducing the dimension of our solution space.  It is important to note that we cannot simply add multiple EC defining polynomials into $E$ and use  $P_E(A)$.  That would result in a CAD truth-invariant for the disjunction of the ECs, not the conjunction implied by multiple ECs.

Let us first assume that we have two ECs: one whose mvar is that of the system, $x_{n}$, and another whose mvar is $x_{n-1}$.  Consider applying first the operator $P_E(A)$ where $E$ defines the first EC and then $P_{E'}(A')$ where $A' = P_{E}(A)$ and $E' \subseteq A'$ contains the second EC.  Unfortunately, Theorem \ref{thm:McC2} does not validate this approach.  While it could be applied once for the CAD of $\mathbb{R}^{n-1}$ it cannot then validate the CAD of $\mathbb{R}^{n}$ because the first application of the theorem provided sign-invariance while the second requires the stronger condition of order invariance.  The approach is acceptable if $n=3$ (since in two variables the conditions are equivalent for squarefree bases).  

\begin{example}
\label{ex:Simple}
We consider the formula $\phi = f_1=0 \land f_2=0 \land g \geq 0$ with the following polynomials:
\[
f_1 = x+y^2+z, \quad f_2 = x-y^2+z, \quad g = x^2+y^2+z^2-1.
\]
The polynomials are graphed in Figure \ref{fig:SimpleEx1} where $g$ is the sphere, $f_1$ the upper surface and $f_2$ the lower.  We see that $f_1$ and $f_2$ only meet when $y=0$ and this projection is on the right of Figure \ref{fig:SimpleEx1}.  It shows that the solution requires $|x|\geq\sqrt{2}/2$ and $z=-x$.  How could this be ascertained using CAD?  

\begin{figure}[ht]
\caption{The polynomials from Example \ref{ex:Simple}.}
\label{fig:SimpleEx1}
\begin{center}
\includegraphics[scale=0.45]{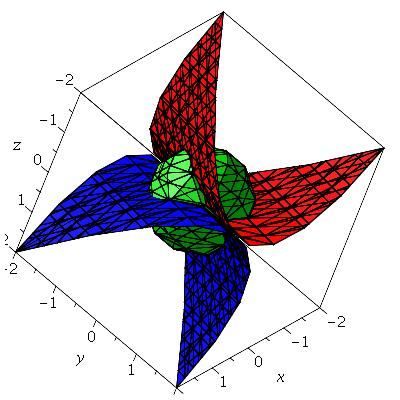}
\includegraphics[scale=0.45]{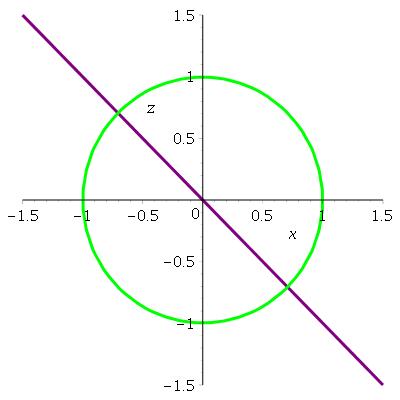}
\end{center}
\end{figure} 

With variable ordering $z \succ y \succ x$ a sign-invariant CAD for $(f_1, f_2, g)$ has 1487 cells using the \textsc{Qepcad-B} program by  \citet{Brown2003b}.
We could then test a sample point of each cell to identify the ones where $\phi$ is true.  
It is preferable to use the presence of ECs.  Declaring an EC to \textsc{Qepcad} will ensure it uses the algorithm by  \citet{McCallum1999b} based on a single use of $P_E(A)$ followed by $P$.  Either choice results in 289 cells.  In particular, the solution set is described using 8 cells: all have $y=0, z=-x$ but the $x$-coordinate unnecessarily splits cells at $\frac{1}{2}(1\pm\sqrt{6})$.  This is identified due to the projection polynomial $d = \disc_y(\res_z(f_i, g))$.
\end{example}
 
For problems with $n>3$ it is still possible to make use of multiple ECs.  However, we must include the extra information necessary to provide order-invariance of the non-EC polynomials in the sections of ECs.  The following theorem explains that discriminants are required to maintain order invariance, along with degree invariance (and hence coefficients).

 
%

\begin{theorem}[\citet{McCallum1998}, Thm. 2]
\label{thm:McCordinv}
Let f be a polynomial with main variable $x_n$ with positive degree and $d = \disc(f)$ which we suppose to be non-zero.  Let $S$ be a connected submanifold of $\mathbb{R}^{n-1}$ on which $f$ is degree invariant, and does not vanish identically, and in which $d$ is order invariant.  Then $f$ is analytic delineable on $S$ and is order-invariant in each section of $f$ over $S$.
\end{theorem}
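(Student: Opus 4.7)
The plan is to parallel the proof of Theorem~\ref{thm:McC1}, where having all coefficients of $f$ in the projection operator automatically forced degree-invariance on $S$; here we replace that with the explicit degree-invariance hypothesis. First, since $d \neq 0$ as a polynomial, I would pass to the squarefree part of $f$ in $x_n$, reducing without loss of generality to the case where $f$ and its $x_n$-derivative share no common factor; degree-invariance of the original $f$ together with $\ldcf(f)$ being non-vanishing on $S$ guarantees that this reduction preserves the non-vanishing and degree-invariance hypotheses.

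Next, I would exploit order-invariance of $d$ on the connected set $S$: either $d$ never vanishes on $S$ (the order is zero everywhere), or $d$ vanishes identically on $S$ with some fixed positive order. In the non-vanishing case, $f(a, x_n)$ is a univariate polynomial of constant positive degree with only simple roots for every $a \in S$. The complex analytic implicit function theorem then produces local analytic continuations of each simple root, and since no two roots can collide without forcing $d(a) = 0$, these continuations extend globally over the connected set $S$, yielding analytic delineability. Order-invariance on each section is immediate: each section is a simple root, so $f$ has multiplicity exactly one along it.

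The main obstacle is the identically-vanishing case, where multiple complex roots occur at every point of $S$ and one must rule out any change in the multiplicity pattern as the parameter $a$ moves through $S$. My plan is to use Puiseux-type parametrizations of the branches of $\{f=0\}$ lying above $S$ (or an analytic Abhyankar--Jung style result), and argue that constancy of the order of $d$ together with constancy of the total $x_n$-degree of $f$ pins down the partition of the degree into root multiplicities uniformly along $S$. Uniformity of this partition gives analytic delineability, and order-invariance of $f$ on each section then follows because the multiplicity of $f$ at a root is locally constant and each section is connected. This final step is the analytic-geometric core of the theorem; the squarefree reduction and the non-vanishing case are comparatively routine.
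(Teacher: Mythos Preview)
The paper does not prove this theorem: it is quoted as Theorem~2 of McCallum (1998) and used as a black box to validate the operator $P_F^*(B)$.  There is therefore no proof in the present paper against which to compare your attempt.

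For what it is worth, your outline is broadly in the spirit of McCallum's original argument, which does pass to the squarefree case and then relies on a nontrivial analytic result (in the Abhyankar--Jung / Zariski equisingularity circle) to treat the case where the discriminant vanishes identically on $S$.  Two points of caution.  First, your squarefree reduction is slightly off: degree-invariance of $f$ on $S$ does \emph{not} assert that $\ldcf(f)$ is non-vanishing on $S$, only that some fixed coefficient is the effective leading one after restriction to $S$; you should work with that effective leading coefficient.  Second, and more substantively, the sentence ``constancy of the order of $d$ \dots\ pins down the partition of the degree into root multiplicities uniformly along $S$'' is precisely the hard step, and your proposal names it rather than proves it; in McCallum's paper this is where the real work and auxiliary lemmas lie, so as written your final case is a plan rather than an argument.
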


Theorem \ref{thm:McCordinv} was used originally as a tool to prove Theorem \ref{thm:McC1}: it gives us the order invariance of polynomials and individual delineability (adding the resultants then extends this to delineability of the set).  We can use it again now to show that the strengthening of (\ref{eq:ECProj}) with the added discriminants and coefficients to form (\ref{eq:ECProjStar}) allows for the order invariance conclusion needed for continued application and validation of the operator.

Suppose we have a formula with two ECs, one with mvar $x_n$ and the other with mvar $x_{n-1}$.   We may now use a reduced operator twice.  We first calculate $A' = P_E(A)$ where $E$ contains the defining polynomial of the first EC, and then $P_{E'}^{*}(A')$ where $E'$ contains the defining polynomial of the other.  Subsequent projections simply use $P$.  When lifting we use: first Theorem \ref{thm:McC1} to verify the CAD of $\mathbb{R}^{n-2}$ as order-invariant for $P_{E'}^{*}(A')$; then Theorem \ref{thm:McC1} to verify the CAD of $\mathbb{R}^{n-1}$ delineable and order-invariant for $E'$, and Theorem \ref{thm:McCordinv} to verify it order-invariant and delineable for the individual polynomials of $A'$ in the sections of $E'$; and finally Theorem \ref{thm:McC1} and \ref{thm:McC2} to verify the CAD of $\mathbb{R}^n$ order-invariant for $E$ and sign-invariant for $A$ in those cells that are both sections of $E$ and $E'$.


\subsection{EC propagation}
\label{SUBSEC:Propagation}

We can now maximise savings in projection when we have ECs in different main variables.  Further, if we have two ECs with the same mvar we can usually derive another with a lower mvar by taking the resultant of the two defining polynomials.  \citet{McCallum2001} defined this as \emph{EC propagation}.  The process requires the two original ECs to be independent, i.e. the satisfaction of one does not imply the satisfaction of the other.  

Given additional ECs one can perform multiple rounds of propagation to obtain implicit ECs in a sequence of different main variables.  Actually in this case there will be more ECs than we are able to use.  For example, given (independent) ECs $f_i(x,y,z) = 0$ for $i=1,2,3$ in variables $z \prec y \prec x$ then a further three implicit ECs can be found with main variable $y$ and another three with main variable $x$:
\begin{align*}
r_1 &= \res_{z}(f_1, f_2), \quad r_2 = \res_{z}(f_1, f_3), \quad r_3 = \res_{z}(f_2, f_3), \\
R_1 &= \res_{y}(r_1, r_2), \quad R_2 = \res_{y}(r_1, r_3), \quad R_3 = \res_{y}(r_2, r_3).
\end{align*}
Of course, the latter three will not be independent (the vanishing of one should imply the vanishing of another) but even then there may still be questions of efficiency over which to use.  While \citet{BDEW13}h have developed heuristics to help with the choice of which EC to use, there is likely room for improvement\footnote{One possibility is the use of machine learning classifiers to make such choices.  This is a growing topic within mathematical software, with a recent survey given by \citet{England2018}.  It has been applied to CAD by \citet{HEWDPB14, HEDP16}.}.

\begin{example}
\label{ex:Simple1a}
Consider again the example problem from Example \ref{ex:Simple}.  We can propagate the two ECs $f_1=0$ and $f_2=0$ to find implicit EC $r_1=0$ as defined above.  The resultant of the two defining polynomials is $-2y^2$ so we may simplify the EC to $y=0$. 

If we declare both ECs in $z$ to \textsc{Qepcad} then it will perform the propagation for us and use reduced projection twice.  It will actually apply $P_E(A)$ twice (allowed since $n=3$) to produce a CAD with 133 cells.  The solution set is now described using only 4 cells (the minimum possible).  Note that $d$ (see Example \ref{ex:Simple}) was no longer produced as a projection polynomial.
\end{example}

\section{Reductions in the Lifting Stage}
\label{SEC:Lifting}

The first main contribution of the present paper is to realise that the theorems from the previous section allow for significant savings in the lifting phase (beyond those achieved from reduced projection).  To implement these we must discard two embedded principles of CAD:
\begin{itemize}
\item That the projection polynomials are a fixed set.
\item That the invariance structure of the final CAD can be expressed in terms of sign-invariance of polynomials.
\end{itemize}
The first was also abandoned by \citet{CMXY09, BCDEMW14, JdM12, Brown2015}, while the second by \citet{BM05, MB09}.

\subsection{Minimising the number of polynomials used for lifting}
\label{SUBSEC:IL-Poly}

Consider Theorem \ref{thm:McC2}: it allows us to conclude that $g$ is sign-invariant in the sections of $f$ produced over a CAD of $\mathbb{R}^{n-1}$ order-invariant for $P_{\{f\}}(\{f,g\})$.  Therefore, it is sufficient to perform the final lift with respect to $f$ only (decompose cylinders according to the roots of $f$ but not $g$).  The decomposition imposes sign-invariance for $f$ while Theorem \ref{thm:McC2} guarantees it for $g$ in the cells where it matters (where those signs could change the truth of the formula).  

\begin{example}
\label{ex:Simple2}
We return to Example \ref{ex:Simple}.  Recall that designating either EC and using the algorithm by \citet{McCallum1999b} produced a CAD with 289 cells.  If we follow this approach but lift only with respect to the designated EC at the final step (implemented in the \textsc{Maple} package by \citet{EWBD14}) we obtain a CAD with 141 cells: less than half the output.  
\end{example}

This improved lifting follows from the theorems of \citet{McCallum1999b}, but was only noticed 15 years later during the generalisation of that work to the case of multiple formulae by \citet{BDEMW13, BDEMW16}.  Experiments there demonstrated its importance, particularly for problems with many constraints: see Section 8.3 of \citep{BDEMW16}.  

When we apply a reduced operator at two levels then we can make such reductions at both the corresponding lifts.

\begin{example}
\label{ex:Simple3}
We return to the problem from Example \ref{ex:Simple}.  Set $A = \{f_1, f_2, g\}$ and $E=\{f_1\}$.  The first projection to eliminate $z$ finds
\[
P_E(A) = \{ y, y^4+2xy^2+2x^2+y^2-1 \}.
\]
These are the resultants of $f_1$ with $f_2$ and $g$ (after the first is simplified as discussed in Example \ref{ex:Simple1a}).  The discriminant of $f_1$ was a constant and so could be discarded, as was its leading coefficient (meaning no further coefficients were required as explained in Remark \ref{rem:coeffs}).
Set $A' = P_E(A)$ and $E' = {y}$ (since $y$ defines an EC for the problem as discussed in Example \ref{ex:Simple1a}).  We have $P_{E'}(A') = \{ R \}$ where 
\[
R = \res_y(y, y^4+2xy^2+2x^2+y^2-1) = 2x^2-1.
\]
The other possible entries (the discriminants and coefficients from $E'$) are all constants.  We hence build a 5 cell CAD of the real line with respect to the two real roots of $R$.  We then lift above each cell with respect to $y$ only, in each case splitting the cylinder into three cells about $y=0$, to give a CAD of $\mathbb{R}^2$ with 15 cells.  

Finally, we lift over each of these 15 cells with respect to $f_1$ to give 45 cells of $\mathbb{R}^3$.  This compares to 133 from \textsc{Qepcad}, which used reduced projection but then lifted with \emph{all} projection polynomials.  
No polynomials were nullified, so using Theorems \ref{thm:McC1} and \ref{thm:McC2}, the output is concluded truth-invariant for $\phi$.
\end{example}

\begin{remark}
\label{rem:PointlessLifting}
We note that not only is the additional lifting that \textsc{Qepcad} performed unnecessary for the problem at hand, it also provides no further structure on the output.  For example, if we had lifted with respect to $f_2$ at the final stage in Example \ref{ex:Simple3} then we would be doing so without the knowledge that it is delineable.  Hence splitting the cylinder at the sample point offers no guarantee that the cells produced are sign-invariant away from that sample point.  So the extra work does not allow us to conclude that $f_2$ is sign-invariant (except on sections of $f_1$ where we could have concluded it already).  
\end{remark}

\begin{remark}
\label{rem:WOIssues}
The improvement outlined above not only decreases output size (and computation time) but also the risk of failure\footnote{Although we note that if the recent work of \citet{MPP19} can be extended to ECs then such worries may be unnecessary altogether.}  from non well-oriented input: we only need worry about nullification of polynomials we lift with.
\end{remark}

\subsection{Minimising the cells for stack generation}
\label{SUBSEC:IL-Stack}

We can achieve more savings by abandoning the aim of producing a CAD sign-invariant with respect to any polynomial, instead insisting only on truth-invariance.  We may then lift cells already known to be false trivially to cylinders.  
The idea of avoiding computations over false cells was presented by \citet{Seidl2006}, and one could argue that it is the basis of the Partial CAD for QE problems by \citet{CH91}.  Our contribution here is to explain how such cells can easily be identified in the presence of ECs.  We demonstrate with our example.

\begin{example}
\label{ex:Simple4}
Return to the problem from Examples~\ref{ex:Simple}~$-$~\ref{ex:Simple3} and in particular the CAD of $\mathbb{R}^2$ produced with 15 cells in Example \ref{ex:Simple3}.  On 5 of these 15 cells the polynomial $R$ is zero and on the others it is either positive or negative.  

Now, $\phi$ can only be satisfied above the 5 cells, as elsewhere the two explicit EC defining polynomials cannot share a root and thus cannot vanish together.  We can already conclude the truth value for the 10 cells (false) and thus we do not need to lift over them, except in the trivial sense of extending them to a cylinder in $\mathbb{R}^3$.  Lifting over the 5 cells where $R=0$ with respect to $f_1$ gives 15 cells, which when combined with the 10 cylinders gives a CAD of $\mathbb{R}^3$ with 25 cells that is truth-invariant for $\phi$.
\end{example}

\begin{remark}
\label{rem:NotPointlessLifting}
The improvements in this subsection are affecting the concluded structure of the output\footnote{In comparison with Remark \ref{rem:PointlessLifting}.}.  The final 25 cell truth-invariant CAD in Example \ref{ex:Simple4} is \emph{not} sign-invariant for $f_1$.  The cylinders above the 10 cells in $\mathbb{R}^2$ where $R \neq 0$ may have $f_1$ varying sign, but since $f_1$ can never equal zero at the same time as $f_2$ in these cells it does not affect the truth of $\phi$.  
\end{remark}

Identifying the 5 cells in $\mathbb{R}^2$ where $R=0$ was trivial since they are simply the sections of the second lift: those cells with second entry even in the cell index.  Similarly, all sections in the third lift are those where $f_1$ is zero, however, we cannot conclude that $f_2$ is also zero on these as Theorem \ref{thm:McC2} only guarantees that $f_2$ is sign-invariant on them.  So we must still finish by evaluating the polynomials at the sample points, but only for the sections.

Reducing the number of cells for stack generation clearly decreases output size, and since the cells can be identified using only an integer parity check computation time decreases (less real root isolation is performed).  As described in Remark \ref{rem:WOIssues} for the improvements in Section \ref{SUBSEC:IL-Poly}, this also decreases the risk of non well-oriented input.

\section{Algorithm}
\label{SEC:Alg}

We present Algorithm \ref{alg:ECM} (note that it is split into two parts) to build a truth-invariant CAD for a formula through the use of ECs.  The input is a quantifier free formula in $x_1, \dots, x_n$ (we assume a fixed variable ordering).  The output is a CAD of $\mathbb{R}^n$ which we interpret as a set of cells where each cell comes with a cell index and sample point.  Each of these is a list of $n$ numbers (recall Section \ref{SUBSEC:CADBackground} for how sample points and indices are represented and extended).  This is the minimum information needed, but implementations may choose to store more, such as the formulae for cells. 

The first two steps process the input formula into sets of polynomials: $A_n$ contains all polynomials in the input formula; while the $E_k$ are subsets of $A_n$ which each contain the defining polynomial for a primitive EC of main variable $x_k$ if one is available, and are empty otherwise.  Step \ref{step:extract} is a simple extraction but Step \ref{step:EC} is non-trivial: it must identify suitable ECs for use in projection, and these may not be explicit in the formula (see Definition \ref{def:EC} and Example \ref{ex:EC}).  In practice this will likely require EC propagation (as described in Section \ref{SUBSEC:Propagation}); and EC designation (choosing which of a variety of potential ECs with the same mvar to identify for reduced projection).  We discuss the intricacies of this latter step in more detail later.  

Steps \ref{step:Pstart} $-$ \ref{step:Pend} run the projection phase of the algorithm.  
Each projection starts by identifying contents and primitive parts.  This is not required for $E_i$: since we assume primitive ECs we have $\cont(E_i) = \emptyset$, $\prim(E_i) = E_i$ and the set of factors of $E_i$ contained in $B_i$ for each $i$.

\begin{algorithm}[ht!]\label{alg:ECM}
\caption{CAD using multiple ECs (part 1 of 2)}
\SetKwInOut{Input}{Input}\SetKwInOut{Output}{Output}
\Input{A QFF $\phi$ in variables $x_1,\ldots,x_n$
}
\Output{Either: $\mathcal{D}$, a truth-invariant CAD of $\mathbb{R}^n$ for $\phi$ formed from a set of cells each defined by an index and a sample point; or 
{\bf FAIL}, if not well-oriented.
}
\BlankLine
Identify from $\phi$ a sequence of sets $E_k$, $k=1, \dots, n$, each either empty or containing a single primitive polynomial with mvar $x_k$, where each polynomial defines an EC for $\phi$\label{step:EC}\;
Extract the set of defining polynomials $A_n$\label{step:extract}\;
\For{$k = n, \dots, 2$\label{step:Pstart}}{
  Set $B_k$ to the finest squarefree basis for ${\rm prim}(A_k)$\;
  Set $C$ to $\cont(A_k)$\;
  Set $F_k$ to the finest squarefree basis for $E_k$\;
  \eIf{$F_k$ is empty}{
    Set $A_{k-1} := C \cup P(B_k)$\label{step:P}\;
  }{
    \eIf{$k=n$ or $k=2$}{Set $A_{k-1} := C \cup P_{F_i}(B_i)$\;\label{step:PEA}}
    {Set $A_{k-1} := C \cup P_{F_i}^{*}(B_i)$\;\label{step:Pend}}
  }
}
If $E_1$ is not empty then set $p$ to be its element; otherwise set $p$ to the product of polynomials in $A_1$\label{step:Bstart}\;  
Build a CAD of the real line, $\mathcal{D}_1$, according to the real roots of~$p$\;
\If{$n=1$}{
\Return $\mathcal{D}_1$\;\label{step:Bend}
}
\end{algorithm}

When there is no declared EC ($E_i$ is empty) the projection operator (\ref{eq:P}) is used (step \ref{step:P}).  Otherwise the operator (\ref{eq:ECProjStar}) is used (step \ref{step:Pend}), unless it is the very first or very last projection (step \ref{step:PEA}) when we use (\ref{eq:ECProj}).  This follows the theory detailed in Section \ref{SEC:Background}.  In each case the output of the projection is combined with the contents to form the next layer of projection polynomials.  

Steps \ref{step:Bstart} $-$ \ref{step:Bend} construct a CAD for the real line (returning it for univariate input), in what is called the \emph{base phase}.  If there is a declared EC in the smallest variable then the real line is decomposed according to its roots; otherwise according to the roots of all the univariate projection polynomials.  

\setcounter{algocf}{0}

\begin{algorithm}[ht!]
\caption{CAD using multiple ECs (part 2 of 2)}
\makeatletter
\advance\c@AlgoLine by 17
\makeatother
\For{$k=2, \dots, n$\label{step:Lstart}}{
  Initialise $\mathcal{D}_k$ to be an empty set\;
  \eIf{$F_{k}$ is empty\label{step:L0}}{
    Set $L:=B_k$\;\label{step:L1}
  }{
  Set $L:=F_k$\;\label{step:L2}
  } 
  \eIf{$E_{k-1}$ is empty\label{step:C0}}{
    Set $\mathcal{C}_a := \mathcal{D}_{k-1}$ and $\mathcal{C}_b$ empty\label{step:C1a}\;
  }{
    Set $\mathcal{C}_a$ to be cells in $\mathcal{D}_{k-1}$ whose cell index final entry is even\label{step:C1b}\;
    Set $\mathcal{C}_b := \mathcal{D}_{k-1} \setminus \mathcal{C}_a$\label{step:C2}\;
  }
  \For{each cell $c \in \mathcal{C}_a$}{
    \If{An element of $L$ is nullified over $c$\label{step:WO}}
    {\Return FAIL\;\label{step:fail}
    }
    Generate a stack of cells over $c$ with respect to the polynomials in $L$.  Form new sample points and cell indicies as extensions of those from $c$\label{step:lift}\;
  }
  \For{each cell $c \in \mathcal{C}_b$}{
    Extend to a single cell in $\mathbb{R}^k$ (cylinder over $c$) (the extension to the index is simply 1 and the extension to the sample point can be any number)\label{step:Lend}\;
  }
}
\Return $\mathcal{D}_n$.
\end{algorithm}

Steps \ref{step:Lstart} $-$ \ref{step:Lend} run the lifting phase, incrementally building CADs of $\mathbb{R}^k$ for $k=2, \dots, n$.  For each $k$ there are two considerations:
\newpage
\begin{itemize}
\item First, whether there is a declared EC with mvar $x_k$.  If so we lift only with respect to this (step \ref{step:L2}) and if not we use all projection polynomials with mvar $x_k$ (step \ref{step:L1}).  See Section \ref{SUBSEC:IL-Poly}.
\item Second, whether there is a declared EC with mvar $x_{k-1}$.  If so we restrict stack generation to those cells where the EC was satisfied.  These are simply those with the final entry of the cell index $I_{k-1}$ even (step \ref{step:C1b}).  We lift the other cells trivially to a cylinder in step \ref{step:Lend}.  See Section \ref{SUBSEC:IL-Stack}.
\end{itemize} 
Algorithm \ref{alg:ECM} clearly terminates.  We will verify that it produces a truth-invariant CAD for the formula if the input is well-oriented, as defined below.

\begin{definition}
\label{def:WO}
For $k=2, \dots, n$ define sets:
\begin{itemize}
\item $L_k$ $-$ the \emph{lifting polynomials}: defining polynomial of the declared EC with mvar $x_k$ if it exists; else all projection polynomials with mvar $x_k$.
\item $\mathcal{C}_k$ $-$ the \emph{lifting cells}: those cells in the CAD of $\mathbb{R}^{k-1}$ in which the designated EC with mvar $x_{k-1}$ vanishes if it exists, and all cells in that CAD otherwise. 
\end{itemize}
The input of Algorithm \ref{alg:ECM} is \emph{well-oriented} if for $k=2, \dots, n$ no element of $L_k$ is nullified over an element of $\mathcal{C}_k$.
\end{definition}

\begin{theorem}
\label{thm:Alg}
Algorithm \ref{alg:ECM} satisfies its specification.
\end{theorem}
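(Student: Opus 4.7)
The plan is to verify two things: (i) if the input is not well-oriented per Definition \ref{def:WO}, Algorithm \ref{alg:ECM} returns FAIL at Step \ref{step:fail}; (ii) otherwise, the output $\mathcal{D}_n$ is a truth-invariant CAD of $\mathbb{R}^n$ for $\phi$. Termination is immediate since every loop runs over a finite index set. Claim (i) is by direct inspection: the nullification check at Step \ref{step:WO} tests exactly the condition defining well-orientedness. The substantive work is (ii), which I would establish by induction on $k$ through the lifting loop, after first verifying that the projection phase has computed the right polynomials.

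For the projection phase, I would justify the choice of operator in each branch. If $E_k$ is empty, Step \ref{step:P} applies McCallum's standard $P$ (validated by Theorem \ref{thm:McC1}). If $E_k$ is non-empty, the three-way split at $k=n$, $2<k<n$, and $k=2$ reflects the analysis of Section \ref{SUBSEC:McC3}: at $k=n$ only sign-invariance of non-EC polynomials in sections of the EC is ultimately needed, so the smaller operator $P_{F_k}$ of Step \ref{step:PEA} is valid by Theorem \ref{thm:McC2}; at intermediate $k$ we need order-invariance of non-EC polynomials to propagate through Theorem \ref{thm:McCordinv} into the next reduced projection, forcing the strengthened $P_{F_k}^*$ of Step \ref{step:Pend}; and at $k=2$ the added discriminants and coefficients are unnecessary because in one variable sign- and order-invariance coincide for squarefree bases, so $P_{F_k}$ again suffices.

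For the induction, the base case constructs $\mathcal{D}_1$ at Steps \ref{step:Bstart}--\ref{step:Bend}: decomposing along the roots of the EC polynomial when $E_1$ is non-empty (so sections are the cells where the EC can be satisfied and sectors are cells where $\phi$ is false), or along all univariate projection polynomials otherwise. In the inductive step at level $k$, I would show that $\mathcal{D}_k$ has the invariance property needed both for the next lift and, ultimately, for truth invariance. The lifting polynomials $L$ chosen in Steps \ref{step:L1}--\ref{step:L2} are exactly the ones guaranteed delineable by the relevant McCallum theorem applied to the inductive $\mathcal{D}_{k-1}$, and the cells $\mathcal{C}_a$ chosen in Steps \ref{step:C1a}--\ref{step:C2} are precisely those on which $\phi$ could still evaluate to true. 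The cells of $\mathcal{C}_b$, being sectors of the EC at level $k-1$, witness that EC failing and hence $\phi$ failing, so extending them trivially to cylinders in Step \ref{step:Lend} preserves truth-invariance.

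The main obstacle is the bookkeeping of the invariance chain through the reduced projection/reduced lifting interface for arbitrary $n$ and arbitrary patterns of empty and non-empty $E_k$: at each level one must verify that the hypotheses invoked from Theorems \ref{thm:McC1}, \ref{thm:McC2}, and \ref{thm:McCordinv} are actually supplied by the previous level's output, tracking which invariance holds where. In particular, when both $E_k$ and $E_{k-1}$ are non-empty, one needs order-invariance of $P_{F_k}^*(B_k)$ only on sections of the EC at level $k-1$, which is exactly the guarantee delivered by Theorem \ref{thm:McCordinv} applied to the strengthened operator. Once this is pinned down, the final conclusion is automatic: on cells of $\mathcal{D}_n$ that are sections of every declared EC, the polynomials of $A_n$ are sign-invariant by iterated application of Theorem \ref{thm:McC2}, so $\phi$ has a constant truth value there; on every other cell some declared EC is violated and $\phi$ is uniformly false.
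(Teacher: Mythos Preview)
Your proposal is correct and follows essentially the same approach as the paper: a well-orientedness check by inspection, then an induction on $k$ through the lifting phase that splits into the four cases ($E_k$ empty; $E_k$ non-empty with $k=2$, $k=n$, or $2<k<n$) and invokes Theorems~\ref{thm:McC1}, \ref{thm:McC2}, and \ref{thm:McCordinv} exactly as you describe. The paper formalises your ``bookkeeping of the invariance chain'' by defining \emph{admissible cells} at each level and stating an explicit inductive hypothesis $I(k)$ that packages the required order-/sign-invariance over those cells; adopting that device would turn your plan into a complete proof with no further ideas needed.
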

\begin{proof}
We must show the CAD is truth-invariant for $\phi$, unless the input is not well-oriented when FAIL is returned.

First consider input with $n=1$.  The projection phase would not run, with the algorithm jumping to the CAD construction in step \ref{step:Bstart}, returning the output in step \ref{step:Bend}.  If there was no declared EC then the CAD is sign-invariant for all polynomials defining $\phi$ and thus every cell is truth invariant for $\phi$.  If there was a declared EC then the output is sign-invariant for its defining polynomial.  Cells would either be intervals where the formula must be false; or points, where the EC is satisfied, and the formula either identically true or false depending on the signs of the other polynomials.

Next suppose that the input were not well-oriented (Definition \ref{def:WO}).  For a fixed $k$, the conditional in steps \ref{step:L0} $-$ \ref{step:L2} sets the lifting polynomials $L_k$ to $L$ and the conditional in steps \ref{step:C0} $-$ \ref{step:C2} the lifting cells $\mathcal{C}_{k}$ to $\mathcal{C}_a$.  Thus it is exactly the conditions of Definition \ref{def:WO} which are checked by step \ref{step:WO}, returning FAIL in step \ref{step:fail} when they are not satisfied.  Hence if the lifting phase completes then the input is well-oriented.

From now on we suppose $n>1$ and the input is well-oriented.  
For a fixed $k$ with $2 \leq k \leq n$ define \emph{admissible} cells to be those in the CAD $\mathcal{D}_{k-1}$ of $\mathbb{R}^{k-1}$ produced by Algorithm \ref{alg:ECM} where all declared ECs with mvar smaller than $x_k$ are satisfied, or to be all cells in that CAD if there are no such ECs.  Then let $I(k)$ be the following statement in italics.

\emph{Over admissible cells (in $\mathbb{R}^{k-1}$) the CAD $\mathcal{D}_k$ of $\mathbb{R}^k$ produced by Algorithm \ref{alg:ECM} is:
(a) order-invariant for any EC with mvar $x_k$; 
(b) order- (sign- if $k=n$) invariant  for all projection polynomials with mvar $x_k$ on sections of the EC over admissible cells, or over all admissible cells if no EC exists.
}

We shall prove that, for all k with $1 \leq k \leq n$, $I(k)$ is true. We have already proved $I(1)$ (the induction base). Now let $1 < k \leq n$ and assume $I(k-1)$ as the induction hypothesis.  The truth of $I(k)$, which completes the induction, is then a consequence of the following remarks:
\begin{itemizeshort}
\item When $E_k$ is empty we use Theorem \ref{thm:McC1} to assert all projection polynomials with mvar $x_k$ are order-invariant in the stacks over admissible cells giving (a) and (b).  
\item When $E_k$ is not empty and $k=2$ we used the projection operator (\ref{eq:ECProj}).  Theorem \ref{thm:McC2} allows us to conclude (b) and that the EC is sign-invariant in admissible cells.  The stronger property of order-invariance follows since the lifting polynomials form a squarefree basis in two variables.
\item  When $E_k$ is not empty and $k=n$ we used the projection operator (\ref{eq:ECProj}). Theorem \ref{thm:McC2} allows us to conclude (b), but also (a) since in the case $k=n$ the statement requires only sign-invariance.
\item When $E_k$ is not empty and $2<k<n$ we used the projection operator (\ref{eq:ECProjStar}).   Theorem \ref{thm:McCordinv} explains that the additions (\ref{eq:ECProjStar}) makes to (\ref{eq:ECProj}) are sufficient to conclude the statement.
\end{itemizeshort}
In each case the assumptions of the theorems are met by the inductive hypothesis exactly over admissible cells, according to whether $E_{k-1}$ was empty.

From the definition of admissible cells, we know that $\phi$ is false (and thus trivially truth invariant) upon all cells in the CAD of $\mathbb{R}^n$ built over an inadmissible cell of $\mathbb{R}^k$, $k<n$.  Coupled with the truth of (a) for $k=1, \dots, n$, this implies the CAD of $\mathbb{R}^n$ is truth-invariant for the conjunction of ECs (although it may not be truth-invariant for any one individually).  The truth of (b) implies that on those cells where all ECs are satisfied, the other polynomials in $\phi$ are sign-invariant and thus $\phi$ is truth-invariant.
\end{proof}

\section{Worked Example}
\label{SEC:Example}

We consider an example with sufficient variables to show all the features of the algorithm but still small enough to discuss in text.  Assume variable ordering $z \succ y \succ x \succ u \succ v$ and define  
\begin{align*}
&f_1 := x-y+z^2, \qquad f_2 := z^2-u^2+v^2-1, \qquad g := x^2-1, \\
&f_3 := x+y+z^2, \qquad f_4 := z^2+u^2-v^2-1, \qquad h := z.
\end{align*}
We consider the formula
\begin{align*}
\phi = f_1=0 \land f_2=0
\land f_3=0 \land f_4=0 \land g \geq 0 \land h \geq 0.
\end{align*}
The solution can be found manually by decomposing the system into blocks.  The surfaces $f_1$ and $f_3$ are graphed in $(x,y,z)$-space on the left of Figure \ref{fig:WE1}.  They meet only on the plane $y=0$ and this projection is shown on the right.  The surfaces $f_2$ and $f_4$ are graphed in $(z,u,v)$-space on the left of Figure \ref{fig:WE2} and meet only when $z=\pm 1$.  We consider only $z=+1$ due to $h \geq 0$, with this projection plotted on the right.  We thus see that the solution set is 
\[
\{ u=\pm v, x = -1, y = 0, z = 1\}.
\] 

\begin{figure}[p]
\caption{The polynomials $f_1$ and $f_3$ from Section \ref{SEC:Example}.}
\label{fig:WE1}
\begin{center}
\includegraphics[scale=0.48]{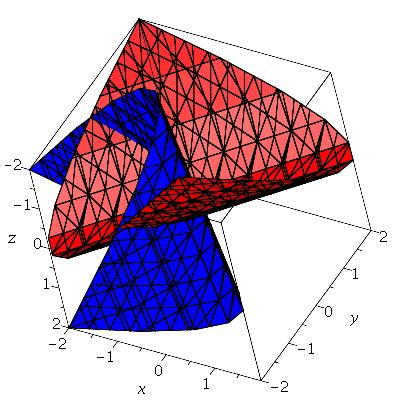}
\includegraphics[scale=0.48]{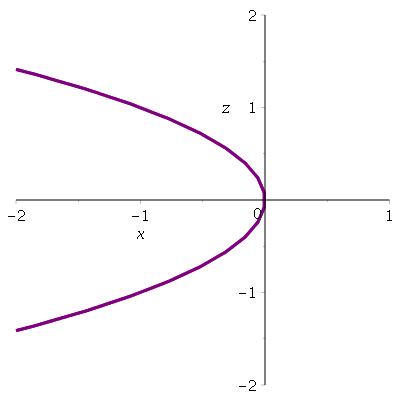}
\end{center}
\end{figure}

\begin{figure}[p]
\caption{The polynomials $f_2$ and $f_4$ from Section \ref{SEC:Example}.}
\label{fig:WE2}
\begin{center}
\includegraphics[scale=0.48]{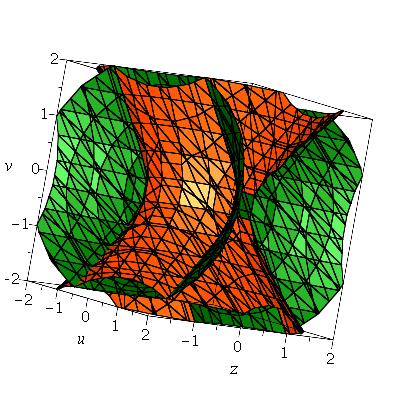}
\includegraphics[scale=0.48]{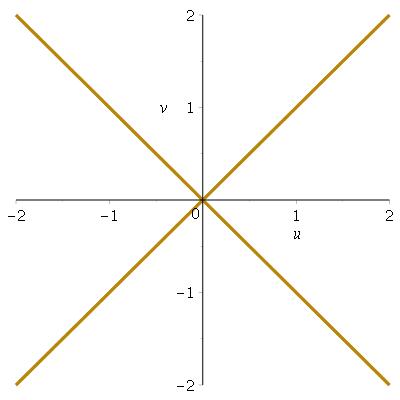}
\end{center}
\end{figure}

To ascertain this by Algorithm \ref{alg:ECM} we must first propagate and designate ECs in Step \ref{step:EC}.  We choose to use $f_1$ first, calculate 
\[
\res_z(f_1, f_2) = (v^2-u^2+y-x-1)^2
\]
and assign $r_1 := v^2-u^2+y-x-1$.  So $r_1$ is the defining polynomial for an EC with mvar $y$.   Similarly consider
\begin{align*}
&\res_y \big( r_1, \res_z(f_1,f_3) \big) = 16(u^2-v^2+x+1)^4, \\
&\res_y \big( r_1, \res_z(f_1,f_4) \big) = 4(u^2-v^2)^2
\end{align*}
and assign $r_2 := u^2-v^2+x+1$, $r_3 := u^2-v^2$.  These are defining polynomials for ECs with mvar $x$ and $u$ respectively.  There is no series of resultants that leads to an EC with mvar $v$ (they all result in constants by that stage).   
We hence identify
$
\{E_j\}_{k=1}^n := \{f_1\}, \{r_1\}, \{r_2\}, \{r_3\}, \{ \, \}
$
in Step \ref{step:EC}.

The algorithm continues by extracting the defining polynomials 
\[
A_5 = \{f_1, f_2, f_3, f_4, g, h\}
\]
and finds $B_5=A_5, F_5=E_5$ (in fact $F_i=E_i$ for all $i=1, \dots, 5$).  

We now start the projection phase.  There is a declared EC for the first projection so we use operator (\ref{eq:ECProj}) to derive
\begin{align*}
A_4 := P_{F_5}(B_5) &= \{(x^2-1)^2, (-u^2+v^2-x+y-1)^2, \\
&\quad (u^2-v^2-x+y-1)^2, 4y^2, x-y\}.
\end{align*}
Hence $C := \{x^2-1\}$ and
\begin{align*}
B_4 &:= \{ y, y-x, -u^2+v^2-x+y-1, u^2-v^2-x+y-1 \}.
\end{align*}
For the next projection we must use operator (\ref{eq:ECProjStar}), giving
\[
A_3 := C \cup P_{F_{4}}^*(B_4) = \{x^2-1, u^2-v^2+x+1, u^2-v^2, u^2-v^2+1\}.
\]
For this example the extra discriminants in (\ref{eq:ECProjStar}) all evaluated to constants and so could be discarded, while for all polynomials the leading coefficient was constant and so could be discarded with no further coefficients considered (see Remark \ref{rem:coeffs}).  Then
\[
B_3 :=  \{x^2-1, u^2-v^2+x+1\}, \quad C := \{u^2-v^2, u^2-v^2+1\},
\]
and the next projection also uses (\ref{eq:ECProjStar}) to produce
\[
A_2 := \{u^2-v^2, u^2-v^2+1, u^4-2u^2v^2+v^4+2u^2-2v^2\}.
\]
For the final projection there is no EC and so we use operator (\ref{eq:P}) to find
$A_1 := \{v^2\}.$ 
The base phase of the algorithm hence produces a 3-cell CAD of the real line isolating $0$.  

For the first lift we have $L = \{u^2-v^2\}$ and $C_a$ containing all 3 cells.  Above the two intervals we split into 5 cells by the curves $u=\pm v$, while above $v=0$ we split into three cells about the origin.
From these 13 cells of $\mathbb{R}^2$ we select the 5 which were sections of $u^2-v^2$ for $C_a$.  These are lifted with respect to $L = \{r_2\}$, and the other 8 are simply extended to cylinders in $\mathbb{R}^3$.  Together this gives a CAD of $\mathbb{R}^3$ with 23 cells.  The next two lifts are similar, producing first a CAD of $\mathbb{R}^4$ with 53 cells and finally a CAD of $\mathbb{R}^5$ with 113 cells.  The entire calculation takes less than a second in \textsc{Maple}.

\subsection{Choice in EC designation}

Algorithm \ref{alg:ECM} could have been initialised with alternative EC designations.  There were the 4 explicit ECs with mvar $z$, and by taking repeated resultants we discover the following implicit ECs, in sets with decreasing mvar:
\begin{align*}
&\{y^2, u^2-v^2+x-y+1, -u^2+v^2+x-y+1, 
\\
&\hspace*{0.25in} u^2-v^2+x+y+1, -u^2+v^2+x+y+1 \},
\\
&\{x+1, -u^2+v^2+x+1, u^2-v^2+x+1\}, \quad
\\ 
&\{u^2-v^2\}.
\end{align*}
There are hence $60$ possible permutations of EC designation, but they lead to only 3 different output sizes:  113, 103 and 93 cells.  
Heuristics for other questions of CAD problem formulation \citep{DSS04, BDEW13, HEWDPB14, WEBD14} could perhaps be adapted to assist here.  None of these are the minimal truth invariant CAD for $\phi$ as all split the CAD of $\mathbb{R}^1$ at $v=0$ (from the discriminant $u^2-v^2$).  

\subsection{Comparison with previous EC theory}

A sign-invariant CAD of $\mathbb{R}^5$ for the 6 input polynomials can be produced by \textsc{Qepcad} with 1,118,205 cells.  Neither the \texttt{RegularChains} Library in \textsc{Maple} \citep{CMXY09} nor our \textsc{Maple} package \citep{EWBD14} could do this in under an hour.  

Our implementation of the algorithm by \citet{McCallum1999b}, which uses operator (\ref{eq:ECProj}) once but also performs the final lift with respect to the EC only, can produce a CAD with either 3023, 10,935 or 48,299 (twice) cells depending on which EC is designated.  The \textsc{Qepcad} implementation of that algorithm gives 11,961, 30,233, 158,475, or 158,451 cells.  Comparing these sets of figures we see the dramatic improvements from just a single reduced lift.

Allowing \textsc{Qepcad} to propagate the four ECs (so a similar projection phase as Algorithm \ref{alg:ECM} but then a normal CAD lifting phase) produces a CAD with 21,079 cells.  By declaring only a subset of the four (which presumably changes the designations of implicit ECs) a CAD with 5,633 cells can be produced, still much more than using Algorithm \ref{alg:ECM}.

The \texttt{RegularChains} Library can also make use of multiple ECs\footnote{but only the latest version from \url{www.regularchains.org}}, as detailed by \citet{BCDEMW14}, a CAD can be produced instantly.  There are choices with analogies to designation \citep{EBCDMW14}), but they all lead to a 137 cell output.  In particular, they all have an induced CAD of the real line which splits at $v = \pm 1$ as well as $v=0$.

We note that our \textsc{Maple} implementation is unrefined and unoptimised.  We do not claim it as a leading CAD implementation.  The purpose of the paper is to illustrate the state of the art in CAD with EC theory, that all CAD implementations should adapt to reproduce.  The worked example shows the clear benefits of the improved lifting techniques, which we next generalise with a complexity analysis.

\section{Complexity Analysis of CAD with EC}
\label{SEC:Complexity}

We build on recent work by \citet{BDEMW16} to measure the dominant term in bounds on the number of CAD cells produced.  Numerous studies have shown this to be closely correlated to the computation time \citep{DSS04, BDEMW13, BCDEMW14}.  We assume CAD input with $m$ polynomials of maximum degree $d$ in any one of $n$ variables.

\begin{definition}
\label{def:md}
Consider a set of polynomials $p_j$.  The \emph{combined degree} of the set is the maximum degree (taken with respect to each variable) of the product of all the polynomials in the set:
$
\textstyle \max_{i} ( \deg_{x_i}( \, \prod_j p_j \, )).
$

The set has the \emph{(m,d)-property} if it may be partitioned into $m$ subsets, each with maximum combined degree $d$.
\end{definition}
For example,  
$
\{y^2-x, y^2+1\}
$
has combined degree $4$ and thus the $(1,4)$-property, but also the $(2,2)$-property.

We will measure complexity by keeping track of the number and degree of projection polynomials.  Of course, by replacing $\{f,g\}$ with $\{fg\}$ we can reduce the number at the cost of increasing the degree, but since it is much easier to find the roots of  $\{f,g\}$ than $\{fg\}$, we do not want to do that.  The $(m,d)$-property, introduced in the thesis of \citet{McCallum1985}, is in some sense the optimal measure of these properties.

\citet{BDEMW16} proved that if $A$ has the $(m,d)$-property then $P(A) \cup \cont(A)$ has the $(M,2d^2)$-property with $M = \left \lfloor \tfrac{1}{2}(m+1)^2 \right\rfloor$. When $m>1$, we can bound $M$ by $m^2$ (but we need $2m^2$ to cover $m=1$).

\subsection{Complexity of sign-invariant CAD}

If $A$ has the $(m,d)$-property then so does its squarefree basis.  Hence applying this result recursively (as in Table \ref{tab:P}) measures the growth in $(m,d)$-property during projection under operator (\ref{eq:P}).  After the first projection there are multiple polynomials and so the tighter bound for $M$ is used. 

\begin{table}[ht]
\caption{Projection under operator (\ref{eq:P}).}  \label{tab:P}
\begin{center}
\begin{tabular}{cccc}
Variables & Number  & Degree                  \\
\hline \hline
$n$       & $m$       & $d$                   \\
$n-1$     & $2m^2$    & $2d^2$                \\
\hline
$n-2$     & $4m^4$    & $8d^4$                \\
\vdots    & \vdots    & \vdots                \\
$n-r$     & $2^{2^{r-1}}m^{2^{r}}$   & $2^{2^r-1}d^{2^r}$ \\
\vdots    & \vdots                   & \vdots \\
1         & $2^{2^{n-2}}m^{2^{n-1}}$ & $2^{2^{n-1}-1}d^{2^{n-1}}$
\end{tabular}
\end{center}
\end{table}
The number of real roots in a set with the $(m,d)$-property is at most $md$ (although in practice many will be in $\mathbb{C} \setminus \mathbb{R}$).  
The number of cells in the CAD of $\mathbb{R}^1$ is thus bounded by twice the product of the final two entries, plus 1.  Similarly, if we let $d_i$ and $m_i$ be the entries in the Number and Degree columns of Table \ref{tab:P} from the row corresponding to $i$ variables, then the total number of cells in the CAD of $\mathbb{R}^n$ is bounded by 
\begin{align}
\label{eq:BoundP}
\prod_{i=1}^n \left[2m_id_i + 1\right] = (2md+1) \prod_{r=1}^{n-1} 
\left[ 2 \left( 2^{2^{r-1}}m^{2^r} \right)\left( 2^{2^{r}-1}d^{2^r} \right) + 1 \right].
\end{align}
Omitting the $+1$s will leave us with the dominant term of the bound, which evaluates to give the following result.

\begin{theorem}
\label{thm:Complexity1}
The dominant term in the bound on the number of CAD cells in $\mathbb{R}^n$ produced using (\ref{eq:P}) is
\begin{align}
&\qquad (2d)^{2^{n}-1}m^{2^{n}-1}2^{2^{n-1}-1}.
\label{eq:DominantTerm:P}
\end{align}
\end{theorem}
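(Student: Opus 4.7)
The plan is to start directly from the bound (\ref{eq:BoundP}) and extract its dominant term by dropping all additive constants, then resolve the resulting product into a single closed form by repeatedly applying the finite geometric sum identity $\sum_{r=1}^{n-1} 2^r = 2^n-2$. The key technical book-keeping will be tracking the exponent of $2$ since it appears in three places: as the factor introduced by ``$2m_id_i+1$'', as the exponent $2^{r-1}$ inside the polynomial count $m_r=2^{2^{r-1}}m^{2^r}$, and as the exponent $2^r-1$ inside the degree bound $d_r=2^{2^r-1}d^{2^r}$.

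First, I would write the dominant term of (\ref{eq:BoundP}) as
\begin{align*}
T(n) &= (2md)\prod_{r=1}^{n-1}\Bigl[\,2\cdot 2^{2^{r-1}}m^{2^r}\cdot 2^{2^r-1}d^{2^r}\Bigr] \\
     &= (2md)\prod_{r=1}^{n-1} 2^{2^{r-1}+2^r}\,m^{2^r}\,d^{2^r}.
\end{align*}
Separating the three bases and pulling them out of the product gives
\[
T(n)=2\cdot m\cdot d\cdot 2^{\sum_{r=1}^{n-1}(2^{r-1}+2^r)}\cdot m^{\sum_{r=1}^{n-1}2^r}\cdot d^{\sum_{r=1}^{n-1}2^r},
\]
so the whole calculation reduces to evaluating the two geometric sums.

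Next I would substitute $\sum_{r=1}^{n-1}2^r=2^n-2$ and $\sum_{r=1}^{n-1}2^{r-1}=2^{n-1}-1$, yielding exponents $2^n-1$ for both $m$ and $d$, and exponent $1+(2^{n-1}-1)+(2^n-2)=3\cdot 2^{n-1}-2$ for $2$. To finish I would simply rewrite
\[
3\cdot 2^{n-1}-2 \;=\; (2^n-1)+(2^{n-1}-1),
\]
which allows one copy of $2^{2^n-1}$ to be absorbed into $d^{2^n-1}$ to form $(2d)^{2^n-1}$, leaving the residual factor $2^{2^{n-1}-1}$. Combining everything gives $T(n)=(2d)^{2^n-1}m^{2^n-1}2^{2^{n-1}-1}$, matching (\ref{eq:DominantTerm:P}).

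The only real obstacle is arithmetic hygiene; there is no genuine mathematical difficulty, but it is easy to miscount the initial factor of $(2md)$ (which originates from the $i=1$ row of Table \ref{tab:P}, where $m_1=m$ and $d_1=d$ without the doubling), and easy to misalign the indexing of the product with the rows of the table. I would double-check by verifying the small case $n=2$: the formula predicts $(2d)^3 m^3\cdot 2^0=8m^3d^3$, while direct substitution into (\ref{eq:BoundP}) gives $(2md)\cdot 2(2m^2)(2d^2)=8m^3d^3$, confirming the simplification.
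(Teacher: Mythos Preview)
Your argument is correct and is exactly the approach the paper takes: the paper simply says that omitting the $+1$'s from (\ref{eq:BoundP}) and evaluating gives (\ref{eq:DominantTerm:P}), and you have supplied the arithmetic that the paper leaves implicit.

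One small slip worth fixing: your $n=2$ sanity check is miscomputed on \emph{both} sides. The formula gives $(2d)^{3}m^{3}\,2^{\,2^{1}-1}=(2d)^{3}m^{3}\cdot 2=16m^{3}d^{3}$, not $8m^{3}d^{3}$; and the direct product $(2md)\cdot 2\cdot(2m^{2})\cdot(2d^{2})$ is also $16m^{3}d^{3}$. The two errors happen to cancel, so the check ``passes'' by accident rather than by verification. Also, the leading factor $2md$ comes from the row with $n$ variables (the top row of Table~\ref{tab:P}), i.e.\ $m_{n}=m$, $d_{n}=d$ in the paper's indexing, not $m_{1},d_{1}$.
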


\subsection{Reduced projection from ECs}

From now on assume $\ell$ equational constraints, $0< \ell \leq \min(m,n)$, all with different mvar.  For simplicity we assume these variables are $x_n, \dots, x_{n-\ell+1}$, i.e. the first $\ell$ projections are the reduced ones.
\begin{lemma}
\label{lem:Complexity}
Suppose $A$ is a set with the $(m,d)$-property and $E \subset A$ has the $(1,d)$-property.  Then $\cont(A) \cup P_{E}^*(A)$ has the $(3m,2d^2)$-property.
\end{lemma}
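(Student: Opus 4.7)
The plan follows the combined-degree partition argument of \citet{BDEMW16}, while exploiting the structural simplification of $P_E^*$ versus $P$: the reduced operator drops the $\binom{m}{2}$ internal resultants of $A \setminus E$ and retains only the $m$ cross-resultants of $E$ against each piece of $A \setminus E$. This is precisely what trades the quadratic-in-$m$ bound for a linear-in-$m$ one.

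The first step is to fix a partition $A = A_1 \cup \dots \cup A_m$ witnessing the $(m,d)$-property, and set $B_i := A_i \setminus E$; each $B_i$ then has combined degree $\leq d$. For each $i = 1, \dots, m$, I would carve out three subsets of $\cont(A) \cup P_E^*(A)$:
\[
S_i^{(1)} := \cont(B_i) \cup \disc(B_i), \quad S_i^{(2)} := \coeff(B_i), \quad S_i^{(3)} := \{\res(f,g) : f \in E,\ g \in B_i\}.
\]
For $S_i^{(3)}$ the bound is the crisp one: by the multiplicativity $\res(fg, h) = \res(f,h) \res(g,h)$, the product of its elements collapses to $\res\bigl(\prod_{f \in E} f,\; \prod_{g \in B_i} g\bigr)$, a single resultant of two polynomials each of degree $\leq d$ in every variable, so of combined degree $\leq 2d^2$ by the Sylvester bound. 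For $S_i^{(1)}$ and $S_i^{(2)}$ the bounds follow from the standard estimates $\deg_{x_l} \disc(f) \leq (2 \deg_{x_n}(f) - 1) \deg_{x_l}(f)$, $\deg_{x_l} c_j(f) \leq \deg_{x_l}(f)$, and $\deg_{x_l} \cont(f) \leq \deg_{x_l}(f)$, summed using $\sum_{f \in B_i} \deg_{x_l}(f) \leq d$.

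These $3m$ subsets cover everything in $\cont(A) \cup P_E^*(A)$ except the residual $P(E) \cup \cont(E)$. Applying the BDEMW16 theorem directly to $E$ (which has the $(1,d)$-property) shows that this residual itself has the $(2, 2d^2)$-property. It remains to absorb these two leftover classes into the $3m$ slots. The cleanest way is a case split on $m$. For $m \leq 4$ one has $\lfloor (m+1)^2/2 \rfloor \leq 3m$, and the lemma follows immediately from the containment $\cont(A) \cup P_E^*(A) \subseteq \cont(A) \cup P(A)$ together with the BDEMW16 bound on the larger set. For $m \geq 5$ one verifies that, after a careful choice of partition, at least two of the $3m$ slots admit the residual classes without breaching the $2d^2$ bound: for instance if some $A_{i_0}$ is contained in $E$ then all three of $S_{i_0}^{(j)}$ are empty, and in the generic case the strict inequalities in the combined-degree estimates for $S_i^{(1)}, S_i^{(2)}$ provide the needed room.

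The main obstacle is this final absorption step: the combined-degree budget is tight at $2d^2$, and naively merging two such sets doubles the bound, so the argument must carefully track where the slack lies once $E$ has been exposed in the partition. Everything else in the proof is routine bookkeeping with the standard resultant, discriminant, and coefficient degree inequalities already deployed in the proof of the BDEMW16 bound.
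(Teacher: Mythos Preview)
Your approach differs substantially from the paper's and, as you yourself flag, leaves a genuine gap at the absorption step.

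The paper's proof is much shorter because it cites \citet{BDEMW16} not for the full $P(A)$ bound but for the reduced-operator bound: $\cont(A) \cup P_E(A)$ already has the $\bigl(\lfloor (3m+1)/2 \rfloor,\, 2d^2\bigr)$-property. The only addition $P_E^*(A)$ makes over $P_E(A)$ is $\disc(B \setminus F) \cup \coeff(B \setminus F)$. For each of the (at most) $m-1$ elements of $A \setminus E$, its discriminant (degree $\leq d(d-1)$) together with its $\leq d+1$ coefficients (each of degree $\leq d$) has product of degree $\leq d(d-1) + (d+1)d = 2d^2$, so one extra class per element suffices. The total is $\lfloor (3m+1)/2 \rfloor + (m-1) < 3m$, with no absorption or case split needed.

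Your route instead builds $3m$ classes from scratch over $B_i = A_i \setminus E$ and is then left with $P(E) \cup \cont(E)$. The $m \leq 4$ branch of your case split is sound. But for $m \geq 5$ both of your proposed mechanisms fail in general: you cannot always choose the witnessing partition of $A$ so that some $A_{i_0} \subseteq E$ (nothing in the hypotheses forces $E$ to sit inside a single block of an $(m,d)$-partition of $A$), and the ``strict inequalities provide slack'' claim is not an argument --- the combined-degree bounds on your $S_i^{(j)}$ can be attained, leaving no headroom to merge in an additional class of combined degree $2d^2$. One clean repair is to define your three families over $A_i$ rather than $B_i$: then $\disc(E)$, $\coeff(E)$, $\cont(E)$ and $\res(E)$ are automatically absorbed (each $e \in E$ lies in some $A_i$, and resultants among elements of $E$ appear in the corresponding $S_i^{(3)}$), so nothing is left over. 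But the simplest fix is to invoke the $\lfloor (3m+1)/2 \rfloor$ bound for $P_E$ directly, as the paper does, and just count the $m-1$ extra classes.
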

\begin{proof}
\citet{BDEMW16} proved that applying $P_E(A) \cup \cont(A)$ gives a set of polynomials of size at most $\left \lfloor \tfrac{1}{2}(3m+1) \right\rfloor$ with combined degree $2d^2$.

We now have the additional discriminant and coefficients of (\ref{eq:ECProjStar}) to take care of.  
Each polynomial in $A \setminus E$ will generate an additional discriminant of degree at most $d(d-1)$ and $(d+1)$ additional coefficients of degree at most $d$.  Multiplying all these polynomials together gives a single polynomial of degree at most $2d^2$.  
There are $m-1$ polynomials in $A \setminus E$ and so in total this projection generates 
\[
\left \lfloor \tfrac{1}{2}(3m+1) \right\rfloor + (m-1) < 3m
\]
polynomials of degree $2d^2$.
\end{proof}
We apply this recursively in the top part of Table \ref{tab:P2}, with the bottom derived via the process for $P$, as in Table \ref{tab:P}.   

\begin{table}[ht]
\caption{Projection with (\ref{eq:ECProjStar}) $\ell$ times and then (\ref{eq:P}).}  \label{tab:P2}
\begin{center}
\begin{tabular}{cccc}
Variables    & Number    & Degree                   \\
\hline \hline
$n$          & $m$       & $d$                      \\
$n-1$        & $3m$      & $2d^2$                   \\
\vdots       & \vdots    & \vdots                   \\
$n-\ell$     & $3^{\ell}m$    & $2^{2^{\ell}-1}d^{2^{\ell}}$     \\
\hline
$n-(\ell+1)$ & $3^{2\ell}m^2$ & $2^{2^{\ell+1}-1}d^{2^{\ell+1}}$ \\
\vdots       & \vdots                      & \vdots                           \\
$n-(\ell+r)$ & $3^{2^{r}\ell}m^{2^{r}}$    & $2^{2^{\ell+r}-1}d^{2^{\ell+r}}$ \\
\vdots       & \vdots                                   & \vdots              \\
1            & $3^{2^{(n-1-\ell)}\ell}m^{2^{n-1-\ell}}$ & $2^{2^{n-1}-1}d^{2^{n-1}}$ 
\end{tabular}
\end{center}
\end{table}
Define $d_i$ and $m_i$ as the entries in the Number and Degree columns of Table \ref{tab:P2} from the row corresponding to $i$ variables. 
We can bound the number of real roots of projection polynomials in $i$ variables by $m_id_i$.  If we lifted with respect to all projection polynomials the cell count would be bounded by
\begin{align}
&\prod_{i=1}^n \left[2m_id_i + 1\right] 
= \prod_{i=1}^{n-(\ell+1)} \left[2m_id_i + 1\right] 
\times \prod_{i=n-\ell}^{n} \left[2m_id_i + 1\right] 
\label{eq:boundA} \\
&\qquad = \prod_{s=0}^{\ell} \left[ 2 \left( 3^{s}m \right) \left( 2^{2^{s}-1}d^{2^{s}} \right) + 1 \right] 
\times  \prod_{r=1}^{n-\ell-1} \left[ 
2 \left(  3^{2^{r}\ell}m^{2^{r}} \right) \left(2^{2^{\ell+r}-1}d^{2^{\ell+r}} \right) + 1 
\right]. \nonumber
\end{align}
Omitting the $+1$ from each product allows us to calculate the dominant term of the bound explicitly as
\begin{align}
(2d)^{2^{n}-1} m^{2^{n-\ell} + \ell-1} 3^{\ell 2^{n-\ell} + \ell(\ell-3)/2}.
\label{eq:DominantTerm:P2}
\end{align}

\subsection{Reduced lifting from ECs}

Now we consider the benefit of improved lifting.  
Start by considering the CAD of $\mathbb{R}^{n-(\ell+1)}$.  There can be no reduced lifting until this point and so the cell count bound is given by the second product in (\ref{eq:boundA}), which we will denote by $(\dagger)$.
The lift to $\mathbb{R}^{n-\ell}$ will involve stack generation over all cells, but only with respect to the EC.  
This can have at most $d_{n-\ell}$ real roots and so the CAD at most $(2d_{n-\ell}+1) \times (\dagger)$ cells.

The next lift, to $\mathbb{R}^{n-\ell-1}$, will lift the sections with respect to the EC, and the  sectors only trivially (to produce the same number of cylinders).  Hence the cell count bound is
\[
(2d_{n-(\ell-1)}+1)(d_{n-\ell})(\dagger) + (d_{n-\ell}+1)(\dagger)
\]
with dominant term $2d_{n-(\ell-1)}d_{n-\ell}(\dagger)$.  Subsequent lifts follow the same pattern and so $2d_nd_{n-1} \dots d_{n-(\ell-1)}d_{n-\ell}(\dagger)$ is the dominant term in the bound for $\mathbb{R}^n$.  This evaluates to give the following result.

\begin{theorem}
\label{thm:Complexity2}
Consider the CAD of $\mathbb{R}^n$ produced using Algorithm \ref{alg:ECM} in the presence of ECs in the top $\ell$ variables of the ordering.  The dominant term in the bound on the number of cells is
\begin{align}
&\quad 2 \prod_{s=0}^{\ell} \left[ 2^{2^{s}-1}d^{2^{s}}  \right] 
\prod_{r=1}^{n-\ell-1} \left[ 
2 \left(  2^{2^{r}\ell}m^{2^{r}} 2^{2^{\ell+r}-1}d^{2^{\ell+r}} \right)  
\right] \nonumber \\
&= (2d)^{2^{n}-1} m^{2^{n-\ell} -2} 2^{-\ell} 3^{\ell2^{n-\ell} -2\ell}.
\label{eq:DominantTerm:P3}
\end{align}
\end{theorem}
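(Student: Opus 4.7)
The plan is to follow the cascade described in the prose just before the theorem statement, but spell out each stage carefully and then collapse the resulting product. The starting point is the CAD of $\mathbb{R}^{n-(\ell+1)}$: since all ECs sit in the top $\ell$ variables, no reduced lifting applies to any stack built before this level, so its cell count is bounded by $(\dagger) := \prod_{r=1}^{n-\ell-1}(2m_{n-\ell-r}d_{n-\ell-r}+1)$, which is exactly the second factor of (\ref{eq:boundA}). Reading $m_i, d_i$ off of Table \ref{tab:P2}, the dominant term of $(\dagger)$ is $\prod_{r=1}^{n-\ell-1}\bigl[2\cdot 3^{2^{r}\ell}m^{2^{r}}\cdot 2^{2^{\ell+r}-1}d^{2^{\ell+r}}\bigr]$.

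Next I would climb from $\mathbb{R}^{n-(\ell+1)}$ to $\mathbb{R}^n$, one level at a time, applying the two savings identified in Section \ref{SEC:Lifting}. Over each cell of the CAD of $\mathbb{R}^{k-1}$ for which we lift non-trivially (by Algorithm \ref{alg:ECM} these are all admissible cells, i.e.\ those on which the EC with mvar $x_{k-1}$ vanishes when one is declared), the stack is constructed with respect to a \emph{single} EC polynomial, which contributes at most $2d_k+1$ cells instead of the $2m_kd_k+1$ of the unrestricted analysis. Moreover, as soon as we reach a level with an EC beneath it, only the sections of that lower EC feed into stack generation: there are at most $d_{k-1}$ such sections per cell of $\mathbb{R}^{k-2}$, while each of the (at most $d_{k-1}+1$) adjacent sectors contributes just a single cylinder. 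Telescoping this for the $\ell$ upper lifts gives, at $\mathbb{R}^n$, an exact bound with dominant term
\begin{equation*}
2\,d_n d_{n-1}\cdots d_{n-\ell}\,(\dagger),
\end{equation*}
because each of the $\ell$ reduced lifts above $\mathbb{R}^{n-\ell}$ multiplies the previous count by $2d_k$ (from lifting sections of the lower EC with respect to the upper EC) while the trivial cylinders over sectors contribute a strictly lower-order term $(d_{k-1}+1)(\dagger_k)$ that can be absorbed into the dominant factor.

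The remaining work is arithmetic. Substituting $d_{n-s} = 2^{2^{s}-1}d^{2^s}$ from Table \ref{tab:P2} for $s=0,\dots,\ell$ yields
\begin{equation*}
2\prod_{s=0}^{\ell}\bigl[2^{2^{s}-1}d^{2^{s}}\bigr]\cdot\prod_{r=1}^{n-\ell-1}\Bigl[2\bigl(3^{2^{r}\ell}m^{2^{r}}\cdot 2^{2^{\ell+r}-1}d^{2^{\ell+r}}\bigr)\Bigr],
\end{equation*}
which is exactly the first displayed expression in the theorem. The second displayed form then follows by collecting powers: the $d$-exponent sums to $\sum_{s=0}^{\ell}2^s + \sum_{r=1}^{n-\ell-1}2^{\ell+r} = 2^{n}-1$; the $m$-exponent is $\sum_{r=1}^{n-\ell-1}2^{r} = 2^{n-\ell}-2$; the $3$-exponent is $\ell\sum_{r=1}^{n-\ell-1}2^{r} = \ell(2^{n-\ell}-2) = \ell 2^{n-\ell} - 2\ell$; and the $2$-exponent (from the explicit factors of $2$, the $2^{2^s-1}$ and $2^{2^{\ell+r}-1}$ factors) telescopes to give an extra overall $2^{-\ell}$ compared to $(2d)^{2^n-1}$.

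The conceptual content is the bookkeeping of the two lifting savings; the genuine obstacle is the telescoping step that turns the cascade of ``dominant plus lower-order'' bounds into a single clean product. I would handle this by induction on the index of the lift (from $n-\ell$ up to $n$), each time showing that the trivial-cylinder contribution is at most a $(1+o(1))$ perturbation of the leading term, so that only the product of the $2d_k$ factors survives in the dominant asymptotic. Once that is in hand, plugging in Table \ref{tab:P2} and simplifying exponents is routine but is the step most prone to arithmetic slips, and so deserves careful verification.
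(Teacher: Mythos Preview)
Your proposal is correct and follows essentially the same approach as the paper's own argument in Section~\ref{SEC:Complexity}: start from the unrestricted bound $(\dagger)$ on the CAD of $\mathbb{R}^{n-(\ell+1)}$, then climb level by level accounting for the two lifting savings (single-polynomial lifts contributing a factor $2d_k+1$, and restriction to sections contributing $d_{k-1}$ rather than $2d_{k-1}+1$), arriving at the dominant term $2\,d_n d_{n-1}\cdots d_{n-\ell}\,(\dagger)$ before substituting from Table~\ref{tab:P2}. Your write-up is in fact more explicit than the paper's about the exponent bookkeeping and about why the trivial-cylinder contributions are lower order; note also that your $3^{2^{r}\ell}$ (matching Table~\ref{tab:P2} and the closed form) is the intended factor, the $2^{2^{r}\ell}$ in the displayed product of the theorem being a typographical slip.
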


\subsection{Summary of complexity analysis}

The bound in Theorem \ref{thm:Complexity2} is strictly less than the one in Theorem \ref{thm:Complexity1}. The double exponent of $m$ has decreased by the number of ECs; the result of the improved projection in (\ref{eq:DominantTerm:P2}).  Then improved lifting has reduced the single exponents in the bound further still in (\ref{eq:DominantTerm:P3}).  

However, even with this maximal use of ECs, CAD is still doubly exponential in the number of variables due to the first term in (\ref{eq:DominantTerm:P3}), the one whose degree is the degree term.  This should not be surprising: the theory of ECs is based around reducing the number of polynomials identified in each projection, but not the number of projections which controls the degree growth.  Indeed, we can see directly from Tables \ref{tab:P} and \ref{tab:P2} that at the end of projection we are dealing with univariate polynomials of degree doubly exponential in $n$ regardless of whether we used ECs or not.
Reduced lifting allows us to avoid isolating the real roots of many of these polynomials, but we will always need to consider at least one (the EC defining polynomial).  To control degree growth we must show this to be of a lower degree.

\section{Controlling Degree Growth}
\label{SEC:DegreeTheory}

\subsection{Degree growth through iterated resultant calculations}
\label{SUBSEC-IR}

The doubly exponential degree comes from the use of iterated resultant calculations during projection: the resultant of two degree $d$ polynomials is the determinant of a $2d \times 2d$ matrix whose entries all have degree at most $d$, and thus a polynomial of degree at most $2d^2$.  This increase in degree compounded by $(n-1)$ projections gives the first term of the bound (\ref{eq:DominantTerm:P}).  
Note that the derivation of ECs themselves via EC propagation (Section \ref{SUBSEC:Propagation}) is itself such an iterated resultant calculation.  So even though the EC theory of the previous sections allows us to avoid constructing or lifting with many such polynomials, the ECs themselves encode the degree.

The purpose of the resultant in CAD construction is to ensure that the points in lower dimensional space where polynomials vanish together are identified, and thus that the behaviour over a sample point in a lower dimensional cell is indicative of the behaviour over the cell as a whole.  The iterated resultant (and discriminant) calculations involved in CAD have been studied previously, for example by \citet{McCallum1999a} and \citet{LM09}.  We will follow the work of \citet{BM09} who consider the iterative application of the univariate resultant to multivariate polynomials, demonstrating decompositions into irreducible factors involving the multivariate resultants\footnote{They follow the formalisation of \citet{Jouanolou1991} as laid out in \cite[\S2]{BM09}.}.  They show that the approach will identify polynomials of higher degree than the true multivariate resultant and thus more than required for the purpose of identifying implicit equational constraints.  For example, given 3 polynomials in 3 variables of degree $d$ the true multivariate resultant has degree $\mathcal{O}(d^3)$ rather than $\mathcal{O}(d^4)$.

The key result of \citet{BM09} for our purposes follows.  Note that this considers polynomials of a given \emph{total degree}.  However, the CAD complexity analysis discussed above and later is (following previous work on the topic) with regards to polynomials of \emph{degree at most} $d$ in a given variable.  For clarity we use the Fraktur font when discussing total degree and Roman fonts when the maximum degree.

\def\foo{\citet[Cor. 3.4]{BM09}}
\def\x{\emph{\textbf{x}}}
\def\Res{\mathop{\rm Res}\nolimits}
\begin{corollary}[\foo]
Given three polynomials $f_k(\x,y,z)$ of the form 
\[
f_k(\x,y,z) = \sum_{|\alpha|+i+j\le \mathfrak{d}_k}a_{\alpha,i,j}^{(k)}\x^\alpha y^iz^j \in S[\x][y,z],
\]
where $S$ is any commutative ring, then the iterated univariate resultant 
\[
\res_y\big( \res_z(f_1,f_2),\res_z(f_1,f_3) \big) \in S[\x]
\]
is of total degree at most $\mathfrak{d}_1^2\mathfrak{d}_2\mathfrak{d}_3$ in $\x$, and we may express it in multivariate resultants \citep{Jouanolou1991}, denoted $\Res$, as
\begin{equation}\label{eq:BM}
\begin{array}{c}
\res_y\big( \res_z(f_1,f_2), \res_z(f_1,f_3) \big) = 
(-1)^{\mathfrak{d}_1\mathfrak{d}_2\mathfrak{d}_3} \Res_{y,z}(f_1,f_2,f_3) 
\\ 
\qquad\times
\Res_{y,z,z'}\big( f_1(\x,y,z), f_2(\x,y,z), f_3(\x,y,z'), \delta_{z,z'}(f_1) \big).
\end{array}
\end{equation}
Moreover, if the polynomials $f_1, f_2, f_3$ are sufficiently generic and $n > 1$, then this
iterated resultant has exactly total degree $\mathfrak{d}^2_1\mathfrak{d}_2\mathfrak{d}_3$ in $\x$ and both resultants on the right
hand side of the above equality are distinct and irreducible. 
\end{corollary}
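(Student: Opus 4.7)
The plan is to decompose the iterated resultant via Poisson's product formula, identify its vanishing locus as a union of two geometric components, and match each with a universal multivariate resultant in the sense of \citet{Jouanolou1991}. Working over the algebraic closure of $\mathrm{Frac}(S[\x])$, let $z_1,\ldots,z_{\mathfrak{d}_1}$ be the roots of $f_1$ viewed as a polynomial in $z$. Poisson's formula gives $\res_z(f_1,f_k) = \ldcf_z(f_1)^{\mathfrak{d}_k}\prod_i f_k(\x,y,z_i)$ for $k=2,3$, and applying Poisson once more to the outer $\res_y$ expresses the iterated resultant as a product indexed by triples $(y^*,i,j)$, where $y^*$ ranges over $y$-roots of $\res_z(f_1,f_2)$ and $(i,j)$ over pairs of $z$-roots of $f_1(\x,y^*,\cdot)$ with $f_2(\x,y^*,z_i)=0$ and $f_3(\x,y^*,z_j)=0$.

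This index set partitions into a diagonal part $i=j$ and an off-diagonal part $i\ne j$. The diagonal part describes a common zero of $f_1,f_2,f_3$ in $(y,z)$ and is therefore captured by $\Res_{y,z}(f_1,f_2,f_3)$. The off-diagonal part describes two distinct $z$-roots of $f_1$ sharing a common $y$, one paired with $f_2$ and the other with $f_3$; this condition is encoded by appending the divided difference $\delta_{z,z'}(f_1):=\bigl(f_1(\x,y,z)-f_1(\x,y,z')\bigr)/(z-z')$, which vanishes precisely when $z$ and $z'$ are distinct roots of $f_1$, to $f_1(\x,y,z), f_2(\x,y,z), f_3(\x,y,z')$ and taking $\Res_{y,z,z'}$ of these four forms in three variables. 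Up to a constant this reproduces the right-hand side of (\ref{eq:BM}); the global sign $(-1)^{\mathfrak{d}_1\mathfrak{d}_2\mathfrak{d}_3}$ is pinned down by specialising to a favourable case (for instance, diagonal monic polynomials) and invoking the classical sign convention for the Sylvester resultant.

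The total-degree bound follows additively from the multi-degree formula for multivariate resultants: $\Res_{y,z}(f_1,f_2,f_3)$ has total degree $\mathfrak{d}_1\mathfrak{d}_2\mathfrak{d}_3$ in $\x$, while the second factor, involving the degree-$(\mathfrak{d}_1-1)$ form $\delta_{z,z'}(f_1)$ alongside the three original forms, contributes the remaining $\mathfrak{d}_1\mathfrak{d}_2\mathfrak{d}_3(\mathfrak{d}_1-1)$, summing to $\mathfrak{d}_1^2\mathfrak{d}_2\mathfrak{d}_3$. For sufficiently generic inputs the multi-degree formula is exact and Jouanolou's generic irreducibility theorem applies to each factor; distinctness is geometric, since the two vanishing loci (triple zero versus two distinct $z$-roots) are different irreducible components. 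The main obstacle will be the precise accounting for the off-diagonal factor: the multi-degree formula for a resultant of four forms in three variables has to reproduce $\mathfrak{d}_1\mathfrak{d}_2\mathfrak{d}_3(\mathfrak{d}_1-1)$ exactly rather than as a Bezout-style upper bound, and the sign and leading-coefficient bookkeeping must be carried carefully when descending the Poisson identity from the algebraic closure back to a polynomial identity in $S[\x]$.
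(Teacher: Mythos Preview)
The paper does not supply its own proof of this corollary: it is quoted directly from \citet[Cor.~3.4]{BM09} and used as a black box (the surrounding text only interprets the two factors geometrically and extracts the degree count in the subsequent Remark).

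Your proposal is in fact a faithful sketch of the argument in \citet{BM09}. There the identity is obtained exactly as you outline: one applies the Poisson product formula to the inner resultants and then to the outer one, so that the iterated resultant splits over pairs of $z$-roots of $f_1$ sharing a common $y$-value; the diagonal pairs assemble into $\Res_{y,z}(f_1,f_2,f_3)$ and the off-diagonal pairs into the four-form resultant in $(y,z,z')$, with $\delta_{z,z'}(f_1)$ encoding the second $z$-root of $f_1$ (since $f_1(\x,y,z)=0$ together with $\delta_{z,z'}(f_1)=0$ forces $f_1(\x,y,z')=0$). The degree statements follow from Jouanolou's multi-homogeneity and the generic irreducibility from his theory as well. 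Your closing caveats are well placed: the passage from ``same vanishing locus'' to ``equal as polynomials in $S[\x]$'' does require the universal/generic argument over $\mathbb{Z}[a^{(k)}_{\alpha,i,j}]$, not merely over an algebraic closure, and the leading-coefficient factors from Poisson's formula must be tracked and shown to cancel before one can claim the clean factorisation with only the sign $(-1)^{\mathfrak{d}_1\mathfrak{d}_2\mathfrak{d}_3}$ remaining. Those are genuine bookkeeping steps rather than conceptual obstacles, and \citet{BM09} handles them by working with the generic polynomials throughout.
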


\begin{remark}
Although not stated as part of the result by \citet{BM09}, under these generality assumptions, $\Res_{y,z}(f_1,f_2,f_3)$ has total degree $\mathfrak{d}_1\mathfrak{d}_2\mathfrak{d}_3$ and 
the second resultant on the right hand side of (\ref{eq:BM}) has total degree $\mathfrak{d}_1(\mathfrak{d}_1-1)\mathfrak{d}_2\mathfrak{d}_3$ (see \cite[Proposition 3.3]{BM09} and \cite[Theorem 2.6]{McCallum1999a}).
\end{remark}

\noindent \citet{BM09} interpret this result in the following quote:\footnote{The quote contains a correction in the description of the second set of roots (removing a dash from $y_1$ in the second distinct point).  The mistake was identified by the anonymous referees of \citep{ED16a}.}. 
\begin{displayquote}
The resultant $r_{12} := \res_z(f_1, f_2)$ defines the projection of the intersection curve between the two surfaces $\{f_1 = 0\}$ and $\{f_2 = 0\}$. 
Similarly, $r_{13} := \res_z(f_1, f_3)$ defines the projection of the intersection curve between the two surfaces $\{f_1 = 0\}$ and $\{f_3 = 0\}$. 
Then the roots of $\res_y(r_{12}, r_{13})$ can be decomposed into two distinct sets: the set of roots $x_0$ such that there exists $y_0$ and $z_0$ such that 
\[
f_1(x_0, y_0, z_0) = f_2(x_0, y_0, z_0) = f_3(x_0, y_0, z_0),
\]
and the set of roots $x_1$ such that there exist two distinct points $(x_1, y_1, z_1)$ and $(x_1, y_1, z'_1)$ such that 
\[
f_1(x_1, y_1, z_1) = f_2(x_1, y_1, z_1) \quad \mbox{and} \quad f_1(x_1, y_1,\allowbreak z '_1) = f_3(x_1, y_1, z'_1).
\] 
The first set gives rise to the term $\Res_{y,z}(f_1, f_2, f_3)$ in the factorization of the iterated resultant
$\res_y(\res_{12}, \res_{13})$, and the second set of roots corresponds to the second factor.
\end{displayquote}
If the $f_i$ are all ECs then only the first set are of interest to us as the truth of the formula of interest needs them all to vanish at once.
However, for a general CAD construction, the second set of roots may also be necessary as they indicate points where the geometry of the sectors changes.

\subsection{How large are these resultants?}

Consider three ECs defined by $f_1, f_2$ and $f_3$ of degree at most $d$ in each variable \emph{separately}; and that we wish to eliminate two variables $z = x_n$ and $y = x_{n-1}$.  We may na\"\i{}vely set each $\mathfrak{d}_i = nd$ to bound the total degree. 

The following approach does better. Let $K=S[x_1,\ldots,x_{n-2},y,z]$ and $L=S[\xi_1,\ldots,\xi_N,y,z]$. 
Only a finite number of monomials in $x_1,\ldots,x_{n-2}$ occur as coefficients of the powers of $y$, $z$ in $f_1$, $f_2$ and $f_3$. Map each such monomial $x^\alpha=\prod_{i=1}^{n-2}x_i^{\alpha_i}$ to $\widetilde{m_j}:=\xi_j^{\max\alpha_i}$ (using a different $\xi_j$ for each monomial\footnote{We could economise: if $x_1x_2^2\mapsto\xi_1^2$, then we could map $x_1^2x_2^4$ to $\xi_1^4$ rather than a new $\xi_2^4$. Since this is for the analysis and not in implementation, we ignore such possibilities.}) and let $\widetilde{f_i}\in L$ be the result of applying this map to the monomials in $f_i$. Operation $\widetilde{\ }$ commutes with taking resultants in $y$ and $z$ (but not $x_i$).

The total degree in the $\xi_j$ of $\widetilde{f_i}$ is the same as the maximum degree in all the $x_1,\ldots,x_{n-2}$ of $f_i$, i.e. bounded by $d$, and hence the total degree of the $\widetilde{f_i}$ in all variables is bounded by $3d$ ($d$ for the $\xi_i$, $d$ for $y$ and $d$ for $z$). If we apply (\ref{eq:BM}) to the $\widetilde{f_i}$, we see that 
$
\res_y\big( \res_z(\widetilde{f_1},\widetilde{f_2}), \res_z(\widetilde{f_1},\widetilde{f_3}) \big)
$
has a factor $\Res_{y,z}(\widetilde{f_1},\widetilde{f_2},\widetilde{f_3})$ of total degree (in the $\xi_j$) $(3d)^3$. Hence, by inverting $\widetilde{\ }$, we may conclude $\Res_{y,z}(f_1,f_2,f_3)$ has maximum degree, in each $x_i$, of  $(3d)^3$.

The results of \citet{Jouanolou1991} and \citet{BM09} apply to any number of eliminations.  In particular, if we have eliminated not $2$ but $\ell-1$ variables we will have a polynomial $\Res_{x_{n-\ell+1}\ldots x_n}(f_{n-\ell},\allowbreak \ldots,f_n)$ of maximum degree $\ell^{\ell}d^{\ell}$ in the remaining variables $x_1,\ldots,x_{n-\ell}$ as the last implicit EC.
Therefore the multivariate resultants we need, $\Res_{x_{n-\ell+1}\ldots x_n}$, only have singly-exponential growth, rather than the doubly-exponential growth of the iterated resultants: can we compute them?

\subsection{Gr\"{o}bner basis instead of iterated resultants}
\label{SUBSEC-GB}

A \emph{Gr\"{o}bner Basis} $G$ is a particular generating set of an ideal $I$ (within the ring of polynomials over an algebraically closed field) defined with respect to a monomial ordering.  One definition is that the ideal generated by the leading terms of $I$ is generated by the leading terms of $G$.  Gr\"obner Bases (GB) allow properties of the ideal to be deduced such as dimension and number of zeros and so are one of the main practical tools for working with polynomial systems.  Their properties and an algorithm to derive a GB for any ideal were introduced in the 1965 PhD thesis of \citet{Buchberger2006} (since republished).  There has been much research to improve and optimise GB calculation, with the $F_5$ algorithm of \citet{Faugere2002} perhaps the most used approach currently. 

Like CAD the calculation of a GB is necessarily doubly exponential in the worst case \citep{MM82} (with lexicographic monomial ordering).  Recent work by \citet{MR13} showed that rather than being doubly exponential with respect to the number of variables present the dependency is in fact on the dimension of the ideal.  Despite this bound GB computation can often be done very quickly usually to the point of instantaneous for any problem tractable by CAD, as demonstrated for example by \citet{WBD12_GB}.

A reasonably common CAD technique is to precondition systems with multiple ECs by replacing the ECs by their GB.  I.e. let $E = \{e_1, e_2, \dots\}$ be a set of polynomials; 
$G = \{g_1, g_2, \dots\}$ a GB for $E$; and $B$ any Boolean combination of constraints, $f_i \, \sigma_i \, 0$, where $\sigma_i \in \{ <, >, \leq, \geq, \neq, =\}$) and $F = \{f_1, f_2, \dots\}$ is another set of polynomials.  Then 
\begin{align*}
\Phi &:= (e_1 = 0 \land e_2 = 0 \land \dots) \land B \quad
\mbox{and} \quad \Psi &:= (g_1 = 0 \land g_2 = 0 \land \dots) \land B
\end{align*}
are equivalent.  A truth-invariant CAD for $\Psi$ is also truth-invariant for $\Phi$.  

If we consider GB preconditioning of CAD in the knowledge of the improved projection schemes for ECs then we see an additional benefit.  It provides implicit ECs which are not in the main variable of the system removing the need for EC propagation.
Since our aim is to produce one EC in each of the last $\ell$ variables, we need to choose an ordering on monomials which is lexicographic with respect to $x_n\succ x_{n-1}\succ\cdots\succ x_{n-\ell+1}$: it does not actually matter (in regards to the theory) how we tie-break after that\footnote{Research suggests that `total degree reverse lexicographic in the rest' is most efficient.}.

Let us suppose that we have $\ell$ ECs $f_1,\ldots,f_{\ell}$ (at least one of them, say $f_1$ must include $x_n$, and similarly we can assume $f_2$ includes $x_{n-1}$ and so on), such that these imply (even over $\mathbb C$) that the last $\ell$ variables are determined (not necessarily uniquely) by the values of $x_1,\ldots,x_{n-\ell}$.  Then the vanishing of polynomials $f_1$, $\Res_{x_n}(f_1,f_2)$, $\Res_{x_n,x_{n-1}}(f_1,f_2,f_3)$ etc. are all implied by the ECs. Hence either they are in the GB, or they are reduced to 0 by the GB, which implies that smaller polynomials are in the GB. Hence our GB will contain polynomials (which are ECs) of degree (in each variable separately) at most
\[
d, \, 4d^2, \, 27d^3, \, \dots, \, ((\ell+1)d)^{\ell+1}.
\]
Note that we are not making, and in the light of the work \citet{MR13} cannot make, any similar claim about the polynomials in fewer variables. Also, it is vital that the ECs be in the last variables for our use of the work of \citet{Jouanolou1991} and \citet{BM09} to work.  So our results do not directly extend from the case we study, first applying $\ell$ reduced CAD projections in the presence of ECs before reverting to standard projection), to the more general case of having any $\ell$ of the projections be reduced.

\subsection{Inclusion in Algorithm \ref{alg:ECM}}

There are two routes to include the above suggestion in Algorithm \ref{alg:ECM}.  

\begin{enumerate}
\item Directly replace the explicit ECs in $\phi$ by those from the GB as suggested above.  This is a pre-processing of the input to Algorithm \ref{alg:ECM}.  The identification of ECs in Step \ref{step:EC} involves only a minimal designation choice when there are multiple explicit ECs in $\phi$ with the same mvar.

\item Encode this process into a sub-algorithm for Step \ref{step:EC}.  The GB polynomials become additional options for designated ECs along with those from EC propagation and choices are made based on minimal degree or some other criteria \citep{WBD12_GB, HEDP16}.  However, in this case, if GB polynomials are designated they must be added to the input set $A_n$ (a reinterpreting of Step \ref{step:extract} so it extracts both from $\phi$ and $\{E_k\}_{k=1}^n$).
\end{enumerate}
The first approach is the one commonly used in implementations (and the one assumed in later discussions).  The benefit of the second is that it caps any increase in the number of polynomials from the use of GBs.

It is unlikely that the GB would produce more polynomials in the main variable than explicit ECs (since we are starting with a generating set all in the main variable and deriving another which would mostly not be) but we have yet to rule it out.  Of course, the number of polynomials in the input can bear little relation to the number generated by projection.  But with the second approach any increase in the initial $m$ is capped to the number of additional designated ECs taken from the GB.  The second option may become preferable in the event of development of a good (cheap) heuristic.

\section{Evaluating the Use of GBs for ECs}
\label{SEC:DegreeEvaluation}

\subsection{Worked Example}
\label{SUBSEC:GBExample}

Let us work with variable ordering $z \succ y \succ x \succ w$; polynomials
\begin{align*}
f_1 &:= xy -z^2 - w^2 + 2z, \qquad
f_2 := x^2+y^2+z^2+w+z, \\
f_3 &:= -w^2-y^2-z^2+x+z \qquad
h := z+w;
\end{align*}
and the following QFF for which we seek a truth-invariant CAD. 
\[
\phi := f_1=0 \land f_2=0 \land f_3 = 0 \land h>0.
\]
In theory, we could analyse this system with a sign-invariant CAD for the four polynomials however none of the CAD implementations in \textsc{Maple} could do this within 30 minutes.  Instead, let us take advantage of the ECs.  There are 3 explicit ECs all with mvar $z$ meaning only one can be designated for the first projection.  We can propagate to find additional implicit ECs:
\begin{align*}
r_1 &= \res_z(f_1, f_2) 
= y^4 + 2xy^3 + (3x^2 -2w^2 + 2w + 6)y^2 + (2x^3 - 2w^2x \\
&\qquad  + 2wx - 3x)y
+ x^4 -2w^2x^2 + 2wx^2 + 6x^2 + w^4 - 2w^3 + 4w^2 + 6w,
\\
r_2 &= \res_z(f_1, f_3) 
= y^4 + 2xy^3 + (x^2-2x+2)y^2 + (x-2x^2)y + w^2 + x^2-2x
\\
r_3 &= \res_z(f_2, f_3)
= 4y^2 + x^4 + 2x^3 -2w^2x^2 + 2wx^2 + 3x^2 -2w^2x + 2wx \\
&\qquad - 2x + w^4 - 2w^3 + 3w^2 + 2w;
\end{align*}
all with main variable $y$.  Continuing the propagation with
\begin{align*}
R_1 := \res_y(r_1, r_2), \qquad 
R_2 := \res_y(r_1, r_3), \qquad 
R_3 := \res_y(r_2, r_3);
\end{align*}
gives the three polynomials in the Appendix, each degree 16 in $x$.  These are different polynomials\footnote{most easily verified by comparing the final lines of each.} but a numerical plot shows them all to have overlapping real part.  All possible resultants to eliminate $x$  evaluate to $0$.

Step \ref{step:EC} could hence produce $3 \times 3 \times 3 = 27$ possible configurations if ECs are identified by propagation.  Our implementation could build CADs for only 6 of these configurations\footnote{The common factor of these 6 was the designation of $r_2$ for second projection.}, when using a time limit of 30 minutes.  Of the 6 completed there was an average of 423 cells calculated in 113 seconds.  The optimal configuration gave 227 cells in 36 seconds using a designation of $f_2, r_3$ and $R_2$.

Now consider instead taking a GB of $\{ f_1, f_2, f_3 \}$.  We use a plex monomial ordering on the same variable ordering as the CAD to achieve a basis:
\begin{align*}
g_1 &= 2z + x^2 + x - w^2 + w, \\
g_2 &= 4y^2 + x^4 + 2x^3 + (-2w^2+2w+3)x^2 + (2w^2+2w-2)x \\
&\qquad + w^4-2w^3+3w^2+2w, \\
g_3 &= 4yx - x^4 - 2x^3 + (2w^2-2w-5)x^2 + (2w^2 - 2w - 4)x 
\\ &\qquad 
- w^4+2w^3-w^2-4w,
\\
g_4 &= (4w^4-8w^3+4w^2+16w)y + x^7 + 4x^6 + (-4w^2+4w+18)x^5 \\
&\qquad + (-12w^2+12w+36)x^4 + (5w^4-10w^3-31w^2+40w+53)x^3 \\
&\qquad + (10w^4-20w^3-34w^2+52w+32)x^2 - (2w^6-6w^5-7w^4+32w^3 \\&\qquad -13w^2-44w-16)x - 2w^6+6w^5-2w^4-14w^3+12w^2+16w,
\\
g_5 &= x^8 + 4x^7 + (-4w^2+4w+18)x^6 + (-12w^2+12w+36)x^5 + (6w^4 \\
&\qquad -12w^3-30w^2+44w+53)x^4 + 4(3w^4-6w^3-8w^2+15w+8)x^3 \\
&\qquad +(-4w^6+12w^5+6w^4-48w^3+26w^2+64w+16)x^2 \\
&\qquad + (-4w^6+12w^5-4w^4-28w^3+24w^2+32w)x \\
&\qquad + w^8-4w^7+6w^6+4w^5-15w^4+8w^3+16w^2.
\end{align*}
This is an alternative generating set for the ideal defined by the explicit ECs and thus all $g_i=0$ are ECs for $\phi$.  Note that the degrees of the GB polynomials (with respect to any one variable) are on average lower (and never greater) than those of the (corresponding) iterated resultants. 

Deriving ECs this way removes the choice for EC with mvar $z$ or $x$ but there are 3 possibilities for the designation with mvar $y$.  Designating $g_2$ yields 83 cells while either $g_3$ or $g_4$ result in 55 cells.  All 3 configurations took less than 20 seconds to compute (with designating $g_4$ the quickest).

\subsection{Effect on the complexity bound}
\label{SUBSEC:GBComplexity}

We now consider how using a GB to produce the designated ECs will improve the complexity analysis of Section \ref{SEC:Complexity}.  The number of polynomials will be the same as found earlier in Table \ref{tab:P2}.  But we must now track separately the degree of the designated EC and the degree of the main projection polynomials as they are derived differently.  For simplicity we will ignore the constant term and focus on the exponents.  

As described above, the designated ECs will have degrees $d, 4d^2, 27d^3, \dots$ as tracked in the middle column of Table \ref{tab:P3}.  For the projection polynomials in the top half of the table the reduced projection operator $P_F(B)$ will take discriminants and coefficients of the EC polynomial; and resultants of them with the other projection polynomials.  Thus the highest degree polynomial produced will have degree that is the sum of the degree of the EC polynomial and the highest degree other polynomial.  This generates the right column of Table \ref{tab:P3}.  We see that the degree exponents here form the so called \emph{Lazy Caterer's sequence}\footnote{The On-Line Encyclopedia of Integer Sequences (2010), Sequence Number A000124, https://oeis.org/A000124} otherwise known as the \emph{Central Polygonal Numbers}.  The remaining projections recorded in the bottom half of the table use the sign-invariant projection operator and so the degree is squared each time.

\begin{table}[ht]
\caption{Maximum degree of projection polynomials produced for CAD when using projection operator (\ref{eq:ECProjStar}) for the first $\ell$ projections and then (\ref{eq:P}) for the remaining.}  \label{tab:P3}
\begin{center}
\begin{tabular}{c@{\hskip 0.1in}|@{\hskip 0.1in}cc}
\multirow{2}{*}{Variables} & \multicolumn{2}{c}{Maximum Degree} \\
                           & EC & Others                \\
\hline \hline
$n$          & $d$       & $d$            \\
$n-1$        & $4d^2$     & $d^2$       \\
$n-2$        & $27d^3$     & $d^4$     \\
$n-3$        & $256d^4$     & $d^7$                 \\
\vdots       & \vdots    & \vdots   \\
$n-\ell$     & $\ell^{\ell}d^{\ell+1}$ & $d^{\ell(\ell+1)(\nicefrac{1}{2}) + 1}$  \\
\hline
$n-(\ell+1)$ & \multicolumn{2}{c}{$d^{\ell(\ell+1) + 2}$}    \\
$n-(\ell+2)$ & \multicolumn{2}{c}{$d^{2\ell(\ell+1) + 2^2}$}    \\
$n-(\ell+3)$ & \multicolumn{2}{c}{$d^{2^2\ell(\ell+1) + 2^3}$}    \\
\vdots       & \multicolumn{2}{c}{\vdots}                  \\
$n-(\ell+r)$ & \multicolumn{2}{c}{$d^{2^{r-1}\ell(\ell+1) + 2^r}$} \\
\vdots       & \multicolumn{2}{c}{\vdots}         \\
1            & \multicolumn{2}{c}{$d^{2^{n-\ell-2}\ell(\ell+1) + 2^{n-\ell-1}}$}
\end{tabular}
\end{center}
\end{table}

Now let us use the the top line of equation (\ref{eq:boundA}) derived earlier as the bound when using improved EC projection and lifting before applying the degrees of the projection polynomials.  We can substitute here with the degrees from Table \ref{tab:P2} as the $d_i$.  The term with base $d$ may be computed by
\[
\textstyle
\prod_{s=0}^{\ell} \big( \ell^{\ell}d^{s+1} \big) \prod_{r=1}^{n-\ell-1} \big( d^{2^{r-1}\ell(\ell+1) + 2^{r}} \big).
\]
The exponent of $d$ evaluates to
\begin{equation}
\label{eq:newDegreeBound}
2^{(n-\ell)}\tfrac{1}{2}(\ell^2 + \ell + 2) - \tfrac{1}{2}(\ell^2 + \ell) - 2.
\end{equation}

\subsubsection{The ignored constants}

Above, we tracked only the degree of the monomial in the bound, and not the constants that multiply it.  As well as for simplicity, this was because we could not find a closed form expression for the product of constants generated.  However, it is simple to check that the constant factors derived by the GB grow exponentially in $\ell$ while those from iterated resultants grow doubly exponentially.  Further, the constant term can be shown to be strictly lower for all but the first few projections.  Finally, note that in Section \ref{SUBSEC-IR} we saw that the multivariate resultant was itself a factor of the iterated resultant.  

\subsubsection{Comparison with base $m$ term}

Let us compare the derived exponent (\ref{eq:newDegreeBound}) with that for the term with base $m$ from (\ref{eq:DominantTerm:P3}): $2^{n-\ell} -2$.  We see that both show the double exponent of the complexity bound reducing by $\ell$, the number of ECs used.  However, the reduction in degree is not quite as clean as the exponential term in the single exponent is multiplied by a quadratic in $\ell$.  This is to be expected as the singly exponential dependency on $\ell$ in the Number column of Table \ref{tab:P} was only in the term with constant base while for Table 2 the term with base $d$ is itself single exponential in $\ell$.

\subsection{Should one always use GBs?}

In Section \ref{SUBSEC:GBExample} we showed the significant savings available if one derived ECs with GBs and in Section \ref{SUBSEC:GBComplexity} we showed this follows through into a theoretical lowering of the worse case complexity bound.  The latter offers the first theoretical justification for what is a widely used CAD optimisation.

However, experimental studies by \citet{BH91, WBD12_GB, HEDP16} have shown that it is not \emph{always} beneficial to pre-process CAD with GB.  
The most recent experiment by \citet{HEDP16} found that 75\% of a data set of 1200 randomly generated CAD problems benefited from GB preconditioning.  So it is certainly worth giving consideration to how ECs are derived.  As noted earlier, the cost of computing the GB itself is usually negligible in comparison to the CAD so it is reasonable to first compute the GB and then decide whether or not to use it.  A simple man-made heuristic was presented by \citet{WBD12_GB} to make the decision while \citet{HEDP16} described the training of a machine learning classifier to decide.

There is no contradiction here with the complexity analysis above:  the analysis is for the worst case and large input and makes no claim to the average complexity or what happens for smaller input.  However, we hypothesise that repeating those studies using the new multiple EC technology would see a reduction in the cases where GB hindered CAD. 

\section{Caveats and the Need for Primitivity}
\label{SEC:Primitive}

There are a few caveats to the results presented above.  First, Algorithm \ref{alg:ECM} can fail for non-well oriented input, but as noted earlier, this restriction may be lifted if the new theory for Lazard's projection operator validated by \citet{MPP19} can be extended to the EC case.  Second, the complexity analysis (both in Section \ref{SEC:Complexity} and \ref{SUBSEC:GBComplexity}) assumes the designated ECs are in strict succession at the start of projection.  For the first analysis it was only made to simplify the working, but for the second analysis it was crucial.  However, Algorithm \ref{alg:ECM} itself does not carry this restriction and savings will still clearly be made in this case.

The only substantial restriction in the paper is that the designated ECs\footnote{whether they be explicit in the formula or calculated via iterated results or GBs.} be defined by primitive polynomials in the main variable of the projection.  The restriction is common in the literature and present in all the underlying theory of \citet{McCallum1999b, McCallum2001}.  

There are analogies to be made with the well-oriented issue (when a projection factor is nullified).  Non-primitive projection factors are not a problem for general CAD because we can factorize prior to projection (order-invariance of factors implies order-invariance of the product).  We cannot do the same for a non-primitive EC though, as the next example shows.  Also, unlike the well-orientedness issue, the primitivity restriction it is not likely to be removed by developing a Lazard family of EC projection operators.  

\subsection{Possibilities to use non-primitive ECs?}
\label{SUBSEC:TTI}

\begin{example}Consider 
$
\phi := zy = 0 \land \varphi.
$
under ordering $\dots \succ z \succ y \succ \dots$. Polynomial $zy$ is not primitive, so Algorithm \ref{alg:ECM} cannot use the explicit EC.  
\end{example}
We may be tempted to take $E = \{z\}$ as the primitive part, project with operator (\ref{eq:ECProj}) and include the content $y$ in the first projection.  The CAD of $(y, \dots)$-space would be sign-invariant for $y$ and thus the CAD of $(z,y,\dots)$-space truth invariant for the EC (over admissible cells).  But we can no longer say only sections are admissible for the next lift as there may be cells with $z \neq 0$ and $y=0$.  We must instead lift over all cells of $(y, \dots)$-space, saying:
\begin{itemizeshort}
\item Over sections of $y$: $z$ is no longer an EC (as $zy=0$ is forced by $y=0$), so we lift onto all polynomials.
\item Over sectors of $y$: $z$ is an EC, so we only lift with respect to this.
\end{itemizeshort}
In $(z,y, \dots)$-space, some cells are admissible (either $y=0$ or $z=0$) and the rest are not ($zy\ne0$).  
There are difficulties in forming a general algorithm:
\begin{enumerate}
\item What would happen if the main variable of the content with respect to $x_i$ were not $x_{i-1}$?
\item What if the content with respect to $x_i$ were itself not primitive as a polynomial in $x_{i-1}$?
\item What if there were another equational constraint in $x_{i-1}$?
\end{enumerate}
The first two can probably (but we have not implemented this yet) be solved by replacing the logic at lines \ref{step:C0}-- in the algorithm by a dynamic determination of which cells were admissible.

Alternatively in that example we might rewrite $\phi$ as
\begin{equation}
\label{eq:TTIphi}
\phi := (z=0 \land \varphi) \lor (y=0 \land \varphi),
\end{equation}
so each clause has its own EC. 
The theory of truth-table invariant CADs (TTICADs) developed by \citet{BDEMW13, BDEMW16} is designed to deal with such input. More generally, given a formula of the form 
\begin{equation}\label{eq:TTI1}
(f_1=0\land g_1>0)\lor(f_2=0\land g_2>0),
\end{equation}
TTICAD allows for an improvement on the standard EC theory.  Since $f_1f_2=0$ is an implicit EC of (\ref{eq:TTI1}) standard EC theory allows us to avoid studying the $g_i$ away from where any $f_i$ is zero.  By utilising a TTICAD from \citep{BDEMW13} we can also avoid studying the $g_i$ away from where the \emph{corresponding} $f_i$ is zero.  This approach was extended in \citep{BDEMW16} to also consider formulae such as 
\begin{equation}\label{eq:TTI2}
(f_1=0\land g_1>0)\lor(f_2>0\land g_2>0),
\end{equation}
where there is no single implicit EC.  
Although (\ref{eq:TTIphi}) looks closer to (\ref{eq:TTI1}) it is actually more like (\ref{eq:TTI2}) since $y$ is not an EC in the main variable.  

Although there is the possibility of applying TTICAD for this problem it would first require its own extension to use beyond the first projection (analogous to the present work for standard ECs).  

\subsection{Classical non primitivity}
\label{SUBSEC:DHExamples}

We can see the importance of the primitivity restriction in the classic complexity results of \citet{BD07}, \citet{DH88}.  Both rest on the following construction.
Let $P_k(x_k,y_k)$ be the statement $x_k=f(y_k)$ and then define recursively
\begin{align}
&P_{k-1}(x_{k-1},y_{k-1}) := \label{eq:H} \\
&\begin{array}{c}
\cr\underbrace{\exists z_k\forall x_k\forall y_k}_{Q_k}\underbrace{\left((y_{k-1}=y_k\land x_{k}=z_k)\lor(y_{k}=z_k\land x_{k-1}=x_k)\right)}_{L_k}\Rightarrow P_k(x_k,y_k).
\end{array} \nonumber
\end{align}
This is $\exists z_k \left(z_k=f(y_{k-1})\land x_{k-1}=f(z_k)\right)$, i.e. $x_{k-1}=f(f(y_{k-1}))$.  
%
%
\noindent Repeated nesting of this procedure builds the doubly-exponential growth.  So  
\begin{equation}\label{eq:H2}
P_{k-2}(x_{k-2}, y_{k-2}) = Q_{k-1}L_{k-1}\Rightarrow \left(Q_kL_k\Rightarrow P_k(x_k,y_k)\right),
\end{equation}
gives $x_{k-2}=f(f(f(f(y_{k-2}))))$ etc.
Rewriting (\ref{eq:H2}) in prenex form gives 
\begin{equation}\label{eq:H3}
P_{k-2}(x_{k-2}, y_{k-2}) = Q_{k-1}Q_k \neg L_{k-1}\lor\neg L_k\lor P_k(x_k,y_k).
\end{equation}
The negation of  (\ref{eq:H3}) is therefore
\begin{equation}\label{eq:H4}
\neg P_{k-2}(x_{k-2}, y_{k-2}) = \overline Q_{k-1}\overline Q_k  L_{k-1}\land L_k\land\neg P_k(x_k,y_k),
\end{equation}
where the $\overline{\strut\quad}$ operator interchanges $\forall$ and $\exists$.
Now, $L_k$ can be rewritten as
\begin{align}
\label{eq:H5}
L_k &= (y_{k-1} = y_k\lor y_{k}=z_k) \land (y_{k-1}=y_k\lor x_{k-1}=x_k) \nonumber \\
&\qquad \land (x_{k}=z_k\lor y_{k}=z_k)\land(x_{k}=z_k\lor x_{k-1}=x_k)
\end{align}
and further
\begin{align}
\label{eq:H6}
L_k &= (y_{k-1}-y_k)( y_{k}-z_k)=0\land(y_{k-1}-y_k)( x_{k-1}-x_k)=0 \nonumber \\
&\qquad \land (x_{k}-z_k)( y_{k}-z_k)=0\land(x_{k}-z_k)( x_{k-1}-x_k)=0,
\end{align}
which shows $L_k$ to be a conjunction of (non primitive) ECs. This is true for any $L_i$, hence  the propositional part of (\ref{eq:H4}) is a conjunction of eight ECs, mostly non primitive, and $\neg P_k(x_k,y_k)$.  Hence by induction we have that the whole family of examples $\neg P_i$ may be written as complete conjunction of (mostly non primitive) ECs.  
Furthermore there are equalities whose main variables are the first variables to be projected if we try to produce a quantifier-free form of (\ref{eq:H4}). But that quantifier-free form describes the complement of the semi-algebraic varieties in \citep{BD07} or \citep{DH88}  (depending which $P_k$ we take) and these have doubly-exponential complexity in $n$.

So we observe that the classical results proving the doubly exponential complexity of CAD are not tackled by our EC technology.

\section{Lessons for \textsf{SC$^\mathsf{2}$}}
\label{SEC:SC2}

The \textsf{SC$^\mathsf{2}$} community already appreciates that the logical structure of CAD input is important and should be exploited where ever possible.  The main additional lesson from the present paper is that this exploitation can take place not only at the Boolean skeleton level but also in the computer algebra.

\subsection{Reasons for optimism}

There are high barriers to implementing CAD without the support of a computer algebra system, however, SMT solvers such as \textsc{SMT-RAT} by \citet{LSCAB13, KA19} and \textsc{Z3}  by \citet{JdM12} show it is possible.  Indeed, the developers of \textsc{SMT-RAT} are now beginning to expand their CAD module to include a variety of projection operators \citep{VKA17} and even EC functionality as described by \citet{HKA18}.  

One particular barrier is the need multivariate factorization algorithms, which those developing in a computer algebra system can take for granted but represent a significant implementation cost.  On this point we highlight to the SMT community the availability of \textsc{CoCoALib} which is a free C++ library that can perform computer algebra computations without the requirement for the accompanying Computer Algebra System (in this case \textsc{CoCoA}) \citep{AB14}.  Further, \textsc{CoCoA} is now actively developing features for use in SMT as described by \citet{AB17, ABP18}.

\subsection{Incrementality}

A key requirement for the effective use of CAD by SMT-solvers is that the CAD technology be incremental: that polynomials can be added and removed to the input with the data structures of the CAD edited rather than recalculated.  Such incremental CAD algorithms are now under development as part of the \textsf{SC$^\mathsf{2}$} by \citet{KA19, CE18}.

An additional advantage from incremental CAD would be with regards to the issue of well-orientedness.  I.e. if a particular operator is found to not be well-oriented at the end of a CAD calculation the next step would be to revert to a less efficient operator which is a superset of the original.  Refining an existing decomposition should be cheaper than recomputing from scratch.  Although on this point, the development of the Lazard projection theory may remove the well-orientedness condition all together.

However, the use of CAD with ECs incrementally requires additional development work.  First, it introduces additional decisions to be taken such as EC designation and whether to pre-processing with GB (not to mention whether that can also be done incrementally).  Second, this growing number of decisions needs to be taken in tandem, prompting exponential growth in the number of possibilities that overwhelms existing heuristics.  Machine learning techniques may be one way forward, as outlined by \citet{England2018}.

Finally, existing heuristics that guide the Boolean search may not be suitable since the use of ECs could prompt what appears as strange behaviour in the SMT context.  For example, removing a constraint that was equational could actually grow the output CAD since it necessitates the use of a larger projection operator.  Correspondingly, adding an equational constraint could allow a smaller operator and shrink the output.  SMT solver search heuristics will need to be adapted to handle these possibilities.

\newpage

\section{Summary}
\label{SEC:Summary}

We have presented much of the state of the art in the theory of CAD with Equational Constraints.  This included how ECs may be leveraged for savings in the lifting phase as well as projection.  We demonstrated the benefits of the theory with worked examples and complexity analysis.  The latter shows that the worst CAD bound has double exponent that reduced from $n$ by the number of ECs.  Crucially, this is the global double exponent covering both the number and degree of polynomials, if we allow for Groebner Basis pre-processing.  

The main avenues for future work are an exploration of dealing with non-primitive ECs; the extension of the Lazard projection operator to a family of operators for ECs; the development of heuristics for choosing which ECs to designate; and the development of incremental EC technology.  We note that the current results and any future progress have benefits not only for Symbolic Computation but the wider \textsf{SC$^\mathsf{2}$} community.

\subsection*{Acknowledgements}

This work was originally supported by EPSRC grant EP/J003247/1 and later by EU H2020-FETOPEN-2016-2017-CSA project $\mathcal{SC}^2$ (712689). 

We are grateful to all the anonymous referees of this paper, and also those of our conference papers at ISSAC 2015 \citep{EBD15} and CASC 2016 \citep{ED16a}, for many helpful comments.  We are particularly grateful to the referee who pointed out the mistake in the literature on operator (\ref{eq:ECProjStar}) and Dr McCallum for discussing this with us.  

We also thank Prof. Buchberger for reminding JHD that Gr\"obner bases were applicable to the problem of degree growth.




\appendix

\section{The Iterated Resultants From Section \ref{SUBSEC:GBExample}}


\begin{footnotesize}

\begin{align*}
R_1 &:= \res(r_1, r_2, y) 
=  
{x}^{16}
+ 8\,{x}^{15}
+ ( -8\,{w}^{2}+8\,w+64 ) {x}^{14}
+ ( -56\,{w}^{2}+56\,w \\ &\quad +288 ) {x}^{13}
+ ( 28\,{w}^{4}-56\,{w}^{3}-332\,{w}^{2}+400\,w+1138 ) {x}^{12}
+ ( 168\,{w}^{4} \\ &\quad -336\,{w}^{3}-1144\,{w}^{2}+1552\,w+2912 ) {x}^{11}
+ ( -56\,{w}^{6}+168\,{w}^{5}+648\,{w}^{4} \\ &\quad -1816\,{w}^{3}-2664\,{w} ^{2}+5328\,w+6336 ) {x}^{10}
+ ( -280\,{w}^{6}+840\,{w}^{5} \\ &\quad +1400\,{w}^{4}-5400\,{w}^{3}-2616\,{w}^{2}+11368\,w+7808 ) {x}^{9}
+ ( 70\,{w}^{8} \\ &\quad -280\,{w}^{7}-500\,{w}^{6}+3080\,{w}^{5}-270\,{w}^{4}-11576\,{w}^{3}+4860\,{w}^{2} \\ &\quad +20816\,w+7381 ) {x}^{8}
+ ( 280\,{w}^{8}-1120\,{w}^{7}+80\,{w}^{6}+6080\,{w}^{5}-8480\,{w}^{4} \\ &\quad -11792\,{w}^{3}+22840\,{w}^{2}+20192\,w+920 ) {x}^{7}
+ ( -56\,{w}^{10}+280\,{w}^{9} \\ &\quad -80\,{w}^{8}-2160\,{w}^{7}+4960\,{w}^{6}+3200\,{w}^{5}-22608\,{w}^{4}+2584\,{w}^{3} \\ &\quad +40840\,{w}^{2}+16040\,w+2024 ) {x}^{6}
+ ( -168\,{w}^{10}+840\,{w}^{9}-1520\,{w}^{8} \\ &\quad -1360\,{w}^{7}+12016\,{w}^{6}-11296\,{w}^{5}-23368\,{w}^{4}+30136\,{w}^{3}+22032\,{w}^{2} \\ &\quad +624\,w+736 ) {x}^{5}
+ ( 28\,{w}^{12}-168\,{w}^{11}+396\,{w}^{10}+160\,{w}^{9}-3690\,{w}^{8} 
\\
&\quad +6576\,{w}^{7}+4520\,{w}^{6}-24712\,{w}^{5}+13154\,{w}^{4}+37456\,{w}^{3}+1464\,{w}^{2} 
\\
&\quad -1568\,w+5968 ) {x}^{4}
+ ( 56\,{w}^{12}-336\,{w}^{11}+1192\,{w}^{10}-1680\,{w}^{9} 
\\
&\quad -2688\,{w}^{8}+12496\,{w}^{7}-13464\,{w}^{6}-16912\,{w}^{5}+37240\,{w}^{4}+13472\,{w}^{3} \\ &\quad -16384\,{w}^{2}+1984\,w+3072 ) {x}^{3}
+ ( -8\,{w}^{14}+56\,{w}^{13}-248\,{w}^{12}+520\,{w}^{11} \\ 
&\quad +72\,{w}^{10}-3088\,{w}^{9}+7664\,{w}^{8}-2040\,{w}^{7}-16176\,{w}^{6}+20424\,{w}^{5} \\ &\quad +20056\,{w}^{4}-15360\,{w}^{3}-8544\,{w}^{2}+4608\,w+2304 ) {x}^{2}
+ ( -8\,{w}^{14} \\ &\quad +56\,{w}^{13}-296\,{w}^{12}+808\,{w}^{11}-1144\,{w}^{10}-776\,{w}^{9}+6184\,{w}^{8}-7048\,{w}^{7} \\ 
&\quad -6944\,{w}^{6}+19696\,{w}^{5}+3872\,{w}^{4}-16832\,{w}^{3}-1152\,{w}^{2}+4608\,w ) x + {w}^{16} \\ 
&\quad 
-8\,{w}^{15}+52\,{w}^{14}-184\,{w}^{13}+454\,{w}^{12}-440\,{w}^{11}-772\,{w}^{10} +3352\,{w}^{9}
\\ &\quad -2447\,{w}^{8}-4288\,{w}^{7}+8200\,{w}^{6}+2080\,{w}^{5}-7664\,{w}^{4} -384\,{w}^{3} + 2304\,{w}^{2}
\\
R_2 &:= \res(r_1, r_3, y) =  
{x}^{16} + 8\,{x}^{15}
+ ( -8\,{w}^{2}+8\,w+28 ) {x}^{14}
+ ( -56\,{w}^{2}+56\,w \\ 
&\quad +48 ) {x}^{13}
+ ( 28\,{w}^{4}-56\,{w}^{3}-116\,{w}^{2}+160\,w-2 ) {x}^{12}
+ ( 168\,{w}^{4} \\ &\quad -336\,{w}^{3}+80\,{w}^{2}+184\,w-256 ) {x}^{11}
+ ( -56\,{w}^{6}+168\,{w}^{5}+108\,{w}^{4} \\ 
&\quad -592\,{w}^{3}+852\,{w}^{2}-240\,w-12 ) {x}^{10}
+ ( -280\,{w}^{6}+840\,{w}^{5}-1120\,{w}^{4} \\ 
&\quad +360\,{w}^{3}+1872\,{w}^{2}-1448\,w+2000 ) {x}^{9}
+ ( 70\,{w}^{8}-280\,{w}^{7}+220\,{w}^{6} \\ 
&\quad +560\,{w}^{5}-2742\,{w}^{4}+3232\,{w}^{3}-1428\,{w}^{2}+224\,w+4537 ) {x}^{8}
+ ( 280\,{w}^{8} \\ 
&\quad -1120\,{w}^{7}+2720\,{w}^{6}-3280\,{w}^{5}-1280\,{w}^{4}+6016\,{w}^{3}-11696\,{w}^{2}+7496\,w \\
 &\quad +2552 ) {x}^{7}
+ ( -56\,{w}^{10}+280\,{w}^{9}-620\,{w}^{8}+480\,{w}^{7}+2488\,{w}^{6}-6880\,{w}^{5} 
\\
&\quad +9384\,{w}^{4}-5744\,{w}^{3}-9404\,{w}^{2}+12008\,w-4120 ) {x}^{6}
+ ( -168\,{w}^{10}
\end{align*}
\end{footnotesize}
\begin{footnotesize}
\begin{align*} 
&\quad +840\,{w}^{9} -2960\,{w}^{8}+5840\,{w}^{7}-4832\,{w}^{6}-3088\,{w}^{5}+21104\,{w}^{4} 
\\ &\quad -27128\,{w}^{3} +12552\,{w}^{2}+3888\,w-5888 ) {x}^{5}
+ ( 28\,{w}^{12}-168\,{w}^{11}+612\,{w}^{10} \\ &\quad -1280\,{w}^{9} +498\,{w}^{8}+3648\,{w}^{7}-12424\,{w}^{6}+17360\,{w}^{5}-4546\,{w}^{4} \\ 
&\quad -13928\,{w}^{3} + 19032\,{w}^{2}-9344\,w-176 ) {x}^{4}
+ ( 56w^{12} - 336w^{11} + 1552w^{10} \\ &\quad - 4200\,{w}^{9} + 7296\,{w}^{8} - 6080\,{w}^{7} - 7440\,w^{6}+25880\,{w}^{5}-31352\,{w}^{4} \\ &\quad +13472\,{w}^{3} +1856\,{w}^{2}-10304\,w+1536 )x^3
+ ( -8\,{w}^{14}+56\,{w}^{13}-284\,{w}^{12} \\ &\quad +880\,{w}^{11} -1740\,{w}^{10}+1616\,{w}^{9}+2468\,{w}^{8}-10704\,{w}^{7}+15828\,{w}^{6} \\ &\quad -8040\,{w}^{5} -1064\,{w}^{4}+9792\,{w}^{3}-3168\,{w}^{2}+2304 ) {x}^{2}
+ ( -8\,{w}^{14}+56\,{w}^{13} \\ &\quad -320\,{w}^{12} +1096\,{w}^{11}-2800\,{w}^{10}+4600\,{w}^{9}-3968\,{w}^{8}-2152\,{w}^{7} \\ &\quad +9592\,{w}^{6} -10832\,{w}^{5} +5312\,{w}^{4}+4672\,{w}^{3}-5760\,{w}^{2}+4608\,w ) x
+ {w}^{16} \\ &\quad -8\,{w}^{15} +52\,{w}^{14}-208\,{w}^{13} +646\,{w}^{12}-1376\,{w}^{11}+2012\,{w}^{10}-1136\,{w}^{9} \\ &\quad -1295\,{w}^{8}+4328\,{w}^{7}-3992\,{w}^{6} +2368\,{w}^{5}+2320\,{w}^{4}-1920\,{w}^{3}+2304\,{w}^{2}
\\
R_3 &:= \res(r_3, r_3, y) =  
{x}^{16} + 8\,{x}^{15}
+ ( -8\,{w}^{2}+8\,w+44 ) {x}^{14}
+ ( -56\,{w}^{2}+56\,w  
\\
&\quad +160 ) {x}^{13}
+ ( 28\,{w}^{4}-56\,{w}^{3}-228\,{w}^{2}+272\,w+430 ) {x}^{12}
+ ( 168\,{w}^{4}
\\
&\quad -336\,{w}^{3}-592\,{w}^{2}+856\,w+816 ) {x}^{11}
+ ( -56\,{w}^{6}+168\,{w}^{5}+444\,{w}^{4}
\\
&\quad -1264\,{w}^{3}-812\,{w}^{2}+1952\,w+1092 ) {x}^{10}
+ ( -280\,{w}^{6}+840\,{w}^{5}+560\,{w}^{4}
\\
&\quad -3000\,{w}^{3}+32\,{w}^{2}+3032\,w+736 ) {x}^{9}
+ ( 70\,{w}^{8}-280\,{w}^{7}-340\,{w}^{6}
\\
&\quad +2240\,{w}^{5}-902\,{w}^{4}-4208\,{w}^{3}+2716\,{w}^{2}+3120\,w-183 ) {x}^{8}
+ ( 280\,{w}^{8}
\\
&\quad -1120\,{w}^{7}+480\,{w}^{6}+3440\,{w}^{5}-4640\,{w}^{4}-2304\,{w}^{3}+5840\,{w}^{2}+1128\,w\\ &\quad -1144 ) {x}^{7}
+ ( -56\,{w}^{10}+280\,{w}^{9}-60\,{w}^{8}-1760\,{w}^{7}+3128\,{w}^{6}+960\,{w}^{5}\\ &\quad -7352\,{w}^{4}+3216\,{w}^{3}+5860\,{w}^{2}-1320\,w-824 ) {x}^{6}
+ ( -168\,{w}^{10}+840\,{w}^{9}\\ &\quad -1280\,{w}^{8}-880\,{w}^{7}+5568\,{w}^{6}-5008\,{w}^{5}-4464\,{w}^{4}+7848\,{w}^{3}+984\,{w}^{2}\\ &\quad -2576\,w-64 ) {x}^{5}
+ ( 28\,{w}^{12}-168\,{w}^{11}+276\,{w}^{10}+400\,{w}^{9}-2302\,{w}^{8}\\ &\quad +2848\,{w}^{7}+1880\,{w}^{6}-7440\,{w}^{5}+3582\,{w}^{4}+5704\,{w}^{3}-3208\,{w}^{2}-1216\,w\\ &\quad +720 ) {x}^{4}
+ ( 56\,{w}^{12}-336\,{w}^{11}+880\,{w}^{10}-840\,{w}^{9}-1424\,{w}^{8}+4800\,{w}^{7}\\ &\quad -3856\,{w}^{6}-3464\,{w}^{5}+6968\,{w}^{4}+32\,{w}^{3}-3392\,{w}^{2}+448\,w+512 ) {x}^{3}
\\ &\quad + ( -8\,{w}^{14}+56\,{w}^{13}-172\,{w}^{12}+208\,{w}^{11}+308\,{w}^{10}-1504\,{w}^{9}+1972\,{w}^{8}\\ &\quad +432\,{w}^{7}-3788\,{w}^{6}+2920\,{w}^{5}+2552\,{w}^{4}-3136\,{w}^{3}-864\,{w}^{2}+1024\,w\\ &\quad +256 ) {x}^{2}
+ ( -8\,{w}^{14}+56\,{w}^{13}-208\,{w}^{12}+424\,{w}^{11}-352\,{w}^{10}-520\,{w}^{9}\\ &\quad +1744\,{w}^{8}-1416\,{w}^{7}-1176\,{w}^{6}+2928\,{w}^{5}-384\,{w}^{4}-1984\,{w}^{3}+384\,{w}^{2}\\ &\quad +512\,w ) x
+ {w}^{16}-8\,{w}^{15}+36\,{w}^{14}-96\,{w}^{13}+150\,{w}^{12}-48\,{w}^{11}-308\,{w}^{10}\\ &\quad +672\,{w}^{9}-351\,{w}^{8}-648\,{w}^{7}+1096\,{w}^{6}-880\,{w}^{4}+128\,{w}^{3}+256\,{w}^{2}
\end{align*}
\end{footnotesize}

\section*{References}

\setlength{\bibsep}{0pt plus 0.3ex}
\begin{footnotesize}
\bibliographystyle{elsarticle-harv} 
\biboptions{authoryear}
\bibliography{CAD}
\end{footnotesize}


%
%
%

\end{document}